\DeclarePairedDelimiter\ceil{\lceil}{\rceil}
\DeclarePairedDelimiter\floor{\lfloor}{\rfloor}
\newtheorem{theorem}{Theorem}
\newtheorem{assumption}{Assumption}
\newtheorem{lemma}{Lemma}
\newtheorem{definition}{Definition}
\newcommand{\comment}[1]{{ }}
\title{\LARGE \bf
Finite-sample analysis of rotation operator under $l_2$ norm and $l_\infty$ norm
}
\author{Mi Zhou $^{1}$   
\thanks{$^{1}$  Mi Zhou is with the School of Electrical and Computer Engineering, Georgia Institute of Technology, Atlanta, GA 30332.
({\tt\small mzhou91@gatech.edu}
)
        }%
}
\begin{document}
\maketitle
\thispagestyle{empty}
\pagestyle{empty}
\begin{abstract}
In this article, we consider a special operator called the two-dimensional rotation operator and analyze its convergence and finite-sample bounds under the $l_2$ norm and $l_\infty$ norm with constant step size.
We then consider the same problem with stochastic noise with affine variance.
Furthermore, simulations are provided to illustrate our results.
Finally, we conclude this article by proposing some possible future extensions.
\end{abstract}

\section{INTRODUCTION}
Looking for the fixed points of non-expansive mapping (i.e., $T(x)=x$) is an important topic in nonlinear mapping theory and has applications in image recovery and signal processing.
A myriad of research has been done on the properties and theorems of non-expansive operators.
While Banach fixed point theorem stated the existence and uniqueness of fixed point under a contractive mapping, the fixed-point set of a nonexpansive operator can be empty or contains multiple points.
As a direct consequence of non-expansiveness, it is not enough to directly iterate the operator $T$ to find a fixed point. 
Instead, one may iterate using the averaged operator $T_\alpha = (1-\alpha ) I + \alpha T$. 
Such iteration is also known as the Krasnosel'skii-Mann (KM) iteration~\cite{krasnosel1955two} and the update rule is given as in the following
\begin{equation*}
    x_{k+1} = (1-\alpha_k) x_k + \alpha_k T(x_k),
\end{equation*}
where $\{\alpha_k\}$ is the step size sequence.
Convergence of $x_k$ to a fixed-point was proved in~\cite{krasnosel1955two, schaefer1957methode, ishikawa1976fixed} under the bounded orbit assumption.
Under the non-empty fixed-point set assumption, the convergence result is analyzed in~\cite{ryu2016primer, zaiwei}.
The optimal convergence rate $O(1 / {\sqrt{k}})$ is obtained in~\cite{baillonrate,split,Optimaltransp,Bernoullis} for arbitrary norm.

Despite above works in deterministic case, the work in KM iteration of non-expansive operators with stochastic noise is sparse.
In \cite{zaiwei}, the authors derived a relaxed finite-sample bounds for non-expansive operators under $l_2$ norm with bounded variance.
However, the rotation map with some specific rotation angles under $l_\infty$ is neither contractive nor non-expansive, which makes all the existing works inapplicable.
Furthermore, the work on finite-sample analysis of non-expansive operators with affine noise is lack.
In this work, we aim to analyze the properties and finite-sample bounds of two-dimensional rotation operators under both $l_2$ and $l_\infty$ norm with and without affine stochastic noise.
We expect this work can give some hindsight in future studies of finite-sample bound for general non-expansive operators with and without noise.

This paper is organized as follows: in Section \ref{sec:problem}, we first introduce some preliminaries in normed linear space and non-expansiveness. 
We then formulate our problem by constructing two KM iterations under $l_2$ norm and $l_\infty$ norm respectively. 
We analyze their finite-sample bound and provide rigorous theoretic proof for each case.
Then in Section \ref{sec:noise}, we consider the same problem with noise and analyze its finite-sample bound.
Section \ref{sec:simulation} is our simulation results to illustrate our theoretical results in Section \ref{sec:problem} and Section \ref{sec:noise}.
Finally, we conclude our article in Section \ref{sec:conclusion}.

\section{Problem formulated}\label{sec:problem}
In this section, we will first introduce some preliminaries and then formulate our problem under different norms.
\subsection{Preliminaries}
\begin{definition}[Normed space]
A norm on the vector space $V$ is a function $||\cdot||$ that assigns to each vector $v \in
 V$ a real number $||v||$ such that for $c$ a scalar and $u, v\in V$, the following hold:
 \begin{enumerate}
     \item $||u||\geq 0$ with equality hold if and only if $u=0$.
     \item $||cu||=|c| ||u||$.
     \item (Triangle Inequality) $||u+v|| \leq ||u||+||v||$.
 \end{enumerate}
\end{definition}
A vector space $V$, together with a norm $||\cdot||$ on the space $V$, is called normed space.
The distance between $u$ and $v$ is $d(u,v) = ||u-v||$.

\begin{definition}
Let $V$ be one of the standard spaces $\mathbb{R}^n$ and $p \geq 1$ is a real number. The $p$-norm of a vector in $V$ is defined by
\begin{equation*}
||z||_p = (\sum_{i=1}^n |z_i|^p)^{\frac{1}{p}}.    
\end{equation*}
Specifically, when $p=2$, we have the familiar $l_2$ norm.
The $l_\infty$ norm of a vector in $V$ is defined as
\begin{equation*}
||z||_{\infty} = \max \{|z_i|,\; i=1,\cdots, n\}.
\end{equation*}
\end{definition}
\begin{definition}[matrix norm]
For a matrix $A\in \mathbb{R}^{m\times n}$, the operator norm is defined as
\begin{equation*}
||A||_2^2 = \lambda_{\mathrm{max}}(A^\top A), \quad ||A||_\infty = \max_{1\leq i\leq m}\sum_{j=1}^n ||a_{ij}||,
\end{equation*}    
and the following inequality holds for the matrix norm:
\begin{equation}\label{eqn:matrixnorm}
||Ax|| \leq ||A||||x||, \;\forall x \in \mathbb{R}^n.
\end{equation}
\end{definition}
\begin{definition}[\cite{KHAN20081}]
Let $C$ be a nonempty subset of a real Banach space $X$ and $T$ a self-mapping of $C$. Denote $F(T)$ as the set of fixed points of $T$. 
The mapping $T$ is said to be
\begin{enumerate}
    \item non-expansive if $||T(x)-T(y)|| \leq ||x-y||$ for all $x,y \in C$.
    \item quasi-nonexpansive if $||Tx-p|| \leq ||x-p||$, for all $x\in C$ and $p \in F(T)$.
    \item asymptotically nonexpansive if $\exists \{ u_n\}\in [0, +\infty)$ with $\lim_{n\rightarrow \infty}u_n=0$ and $||T^n x-p|| \leq (1+u_n)||x-p||$, for all $x\in C$ and $p\in F(T)$ and $n=1,2,...$.
\end{enumerate}
\end{definition}

\begin{definition}[Krasnosel'skii-Mann (KM) iteration \cite{KM}]
Let $T: C\rightarrow C$ be a nonexpansive map defined on a closed convex domain $C$ in a Banach space $(X, || \cdot ||)$. 
The Krasnosel'skii-Mann iteration approximates a fixed point of $T$ by the sequential averaging process:
\begin{align}
x_{n+1} = (1-\alpha_{n})x_n+\alpha_{n} Tx_n.
\end{align}
where $x_0 \in C$ is an initial guess and $\alpha_n\in(0,1)$ is a given sequence of scalar step sizes.
\end{definition}
\begin{lemma}[\cite{split}]
Let $T: \mathcal{H}\rightarrow \mathcal{H}$ be nonexpansive and $\alpha>0$. 
Then, $T_\alpha:=(1-\alpha)I+\alpha T$ ($I$ is identity mapping) and $T$ have the same set of fixed points.
\end{lemma}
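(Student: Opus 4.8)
The plan is to prove the equality $F(T_\alpha) = F(T)$ by establishing the two inclusions $F(T)\subseteq F(T_\alpha)$ and $F(T_\alpha)\subseteq F(T)$ via a direct algebraic manipulation of the defining identity $T_\alpha = (1-\alpha)I + \alpha T$. Notably, no topological, metric, or convexity properties of $\mathcal{H}$ are needed, and in fact the nonexpansiveness of $T$ is not used at all; the only structural hypothesis that matters is $\alpha > 0$ (more precisely $\alpha \neq 0$).

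First I would take an arbitrary $x \in F(T)$, so that $Tx = x$, and substitute directly into the definition: $T_\alpha x = (1-\alpha)x + \alpha Tx = (1-\alpha)x + \alpha x = x$. Hence $x \in F(T_\alpha)$, which gives $F(T) \subseteq F(T_\alpha)$.

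Conversely, I would take an arbitrary $x \in F(T_\alpha)$, so that $(1-\alpha)x + \alpha Tx = x$. Rearranging yields $\alpha Tx = x - (1-\alpha)x = \alpha x$, and since $\alpha > 0$ we may divide by $\alpha$ to conclude $Tx = x$, i.e. $x \in F(T)$. This gives $F(T_\alpha) \subseteq F(T)$, and combining the two inclusions finishes the proof.

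The only point worth flagging is the role of the hypothesis $\alpha > 0$: it is precisely what licenses the division in the reverse inclusion, and the statement genuinely fails without it (for $\alpha = 0$ one has $T_\alpha = I$, whose fixed-point set is all of $\mathcal{H}$). So there is no real technical obstacle here — the proof is a two-line computation — and the value of the lemma is conceptual: it is exactly what allows one to replace $T$ by the averaged map $T_\alpha$ in the KM iteration without altering the set of points one is trying to approximate.
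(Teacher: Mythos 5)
Your proof is correct and complete; the paper itself gives no proof of this lemma (it is quoted with a citation), and your two-inclusion computation is exactly the standard argument from the cited source. Your side remarks are also accurate: nonexpansiveness plays no role here, and $\alpha\neq 0$ is the only hypothesis actually used.
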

\begin{theorem}
The fixed point for rotation matrix ($\theta \in(0, 2\pi)$) is $[0,0]^\top$.
\end{theorem}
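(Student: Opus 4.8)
The plan is to use the standard $2\times 2$ representation of the planar rotation operator and reduce the fixed-point condition to a linear-algebra question about the invertibility of $R_\theta - I$. First I would write the rotation matrix as
\begin{equation*}
R_\theta = \begin{bmatrix} \cos\theta & -\sin\theta \\ \sin\theta & \cos\theta \end{bmatrix},
\end{equation*}
so that a vector $x = [x_1, x_2]^\top$ is a fixed point exactly when $R_\theta x = x$, equivalently when $(R_\theta - I) x = 0$. Since $R_\theta \cdot [0,0]^\top = [0,0]^\top$ trivially, the origin is always a fixed point; the real content of the statement is that it is the \emph{only} one.

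Next I would compute the determinant of $R_\theta - I$:
\begin{equation*}
\det(R_\theta - I) = (\cos\theta - 1)^2 + \sin^2\theta = 2 - 2\cos\theta.
\end{equation*}
The key observation is that for $\theta \in (0, 2\pi)$ we have $\cos\theta < 1$, hence $\det(R_\theta - I) > 0$ and in particular $R_\theta - I$ is nonsingular. Therefore the homogeneous system $(R_\theta - I)x = 0$ admits only the trivial solution $x = [0,0]^\top$, which establishes the claim; combined with the previous paragraph, $[0,0]^\top$ is the unique fixed point.

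I expect the only point requiring care is the role of the open interval: at the excluded values $\theta = 0$ and $\theta = 2\pi$ one has $\cos\theta = 1$, so $R_\theta = I$ and every point is fixed, which is precisely why the hypothesis $\theta \in (0,2\pi)$ is needed to force $\cos\theta \neq 1$ and hence uniqueness. An alternative coordinate-free route would be a geometric argument that a nontrivial rotation about the origin preserves $\|\cdot\|$ but changes direction, so it strictly moves every point at positive distance from the origin; however, the determinant computation above is the cleanest and most self-contained, so that is the version I would present.
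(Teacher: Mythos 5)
Your proof is correct, and it is in fact more careful than the one in the paper. The paper argues that $\det(R)=1$, so $R$ is full rank, and concludes that $Rx=x$ forces $x=[0,0]^\top$; as written this is a non sequitur, since full rank of $R$ alone does not rule out nontrivial fixed points (the identity matrix is full rank and fixes everything). The object whose invertibility actually matters is $R-I$, and that is exactly what you analyze: $\det(R_\theta-I)=2-2\cos\theta>0$ for $\theta\in(0,2\pi)$, so the homogeneous system $(R_\theta-I)x=0$ has only the trivial solution. Your remark about the endpoints $\theta=0,2\pi$ also makes explicit why the open interval is needed, something the paper's argument cannot even register because $\det(R)=1$ holds for all $\theta$. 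In short, you prove the statement the paper presumably intended to prove, by the argument it should have given; the paper's version needs the same $\det(R-I)$ computation (or the equivalent observation that $1$ is not an eigenvalue of a nontrivial rotation) to be complete.
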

\begin{proof}
 The determinant of $R$ is $\mathrm{det}(R)=1$, which means $R$ is full-rank.
 To make $Rx=x$, the only solution is $[0,0]^\top$.
\end{proof}
\subsection{Problem Description: $l_2$ norm}
We consider the following KM iteration
\begin{align}\label{eqn:l2rotateKM}
x_{k+1} = (1-\alpha_k)x_k+\alpha_k R x_k \frac{||x_k||_2}{||Rx_k||_2}
\end{align}
where $R$ is a two-dimensional rotation matrix defined as
\begin{align*}
R(\theta) = \begin{bmatrix}
\cos \theta & -\sin \theta\\ 
\sin \theta & \cos \theta
\end{bmatrix},
\end{align*}
where $\theta \in (0,2\pi)$ is the counter-clockwise rotation angle.
Since the rotation operator $R$ will only rotate the vector $x_k$ but won't change the size of the vector, we know $\frac{||x_k||_2}{||Rx_k||_2}=1$.
Thus Eqn. \eqref{eqn:l2rotateKM} can also be written as
\begin{align}
x_{k+1} = (1-\alpha_k)x_k+\alpha_k R x_k.
\end{align}
\begin{theorem}
The rotational operator is a non-expansive operator under the $l_2$ norm.
\end{theorem}
\begin{proof}
The norm of $R$ is $||R||_2=\sqrt{\lambda_{\max} (R^\top R)}=\sqrt{\lambda_{\max} (I)}=1$.
Thus the rotation operator is non-expansive.
\end{proof}
\begin{theorem}
The finite sample bound for constant step size is
\begin{align}
||x_k-x^*|| \leq [(\alpha-1)^2+2\alpha(1-\alpha)\cos\theta+\alpha^2]^\frac{k-1}{2} D,
\end{align}
where $D = \mathrm{dist}(x_1,x^*)$.
Specifically, when $\theta=\pi$, $||x_k-x^*||\leq (1-2\alpha)^{k-1}D$.
\end{theorem}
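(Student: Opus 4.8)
The plan is to exploit the fact that the whole iteration is \emph{linear}. First I would invoke the earlier theorem identifying the unique fixed point as $x^* = [0,0]^\top$, so that $||x_k - x^*|| = ||x_k||$, and rewrite the constant-step KM update \eqref{eqn:l2rotateKM} (using $||x_k||_2/||Rx_k||_2 = 1$) as $x_{k+1} = M x_k$ with the single matrix $M := (1-\alpha) I + \alpha R$. Unrolling the recursion gives $x_k = M^{k-1} x_1$, hence by the matrix-norm inequality \eqref{eqn:matrixnorm} together with submultiplicativity of the operator norm, $||x_k - x^*|| = ||M^{k-1} x_1|| \le ||M||_2^{\,k-1}\,||x_1|| = ||M||_2^{\,k-1} D$. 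Note that the recursion naturally starts from $x_1$, which matches the definition $D = \mathrm{dist}(x_1,x^*)$ and the exponent $k-1$.

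The only real computation is then $||M||_2 = \sqrt{\lambda_{\max}(M^\top M)}$. I would expand $M^\top M = ((1-\alpha)I + \alpha R^\top)((1-\alpha)I + \alpha R) = (1-\alpha)^2 I + \alpha(1-\alpha)(R + R^\top) + \alpha^2 R^\top R$ and use the two defining properties of the rotation matrix, namely $R^\top R = I$ and $R + R^\top = 2\cos\theta\, I$. These collapse $M^\top M$ to the scalar multiple of the identity $[(1-\alpha)^2 + 2\alpha(1-\alpha)\cos\theta + \alpha^2]\,I$, so $\lambda_{\max}(M^\top M)$ is exactly that bracket and $||M||_2 = [(\alpha-1)^2 + 2\alpha(1-\alpha)\cos\theta + \alpha^2]^{1/2}$. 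Equivalently, one may observe that $M = \begin{bmatrix} a & -b \\ b & a \end{bmatrix}$ with $a = 1-\alpha+\alpha\cos\theta$ and $b = \alpha\sin\theta$ is a rotation--dilation, whose operator norm is its dilation factor $\sqrt{a^2+b^2}$; expanding gives the same expression. Substituting into the bound above yields the claimed estimate. For the special case $\theta = \pi$, set $\cos\theta = -1$: the bracket becomes $(1-\alpha)^2 - 2\alpha(1-\alpha) + \alpha^2 = ((1-\alpha)-\alpha)^2 = (1-2\alpha)^2$, and raising to the $(k-1)/2$ power gives $|1-2\alpha|^{k-1} D$; the stated form follows by taking $\alpha < 1/2$ so that $1 - 2\alpha > 0$.

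There is no serious obstacle here — the statement is essentially a linear-systems estimate — but one step deserves care. It is tempting to bound $||M||_2 \le (1-\alpha)||I||_2 + \alpha||R||_2 = 1$ by the triangle inequality, but this throws away all the $\theta$-dependence and gives only non-expansiveness rather than the sharp contraction factor. One genuinely needs to diagonalize $M^\top M$ (or recognize the rotation--dilation structure of $M$), and the pleasant fact that $M^\top M$ turns out to be a scalar multiple of the identity is what makes the spectral norm exactly computable rather than merely bounded.
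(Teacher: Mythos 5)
Your proposal is correct and follows essentially the same route as the paper: unroll the linear recursion $x_k = ((1-\alpha)I+\alpha R)^{k-1}x_1$, compute $M^\top M = [(1-\alpha)^2+2\alpha(1-\alpha)\cos\theta+\alpha^2]I$ via $R^\top R=I$ and $R+R^\top=2\cos\theta\,I$, and read off the operator norm. Your added observation that the $\theta=\pi$ case really gives $|1-2\alpha|^{k-1}D$ and the stated form presumes $\alpha<1/2$ is a small but genuine point of care that the paper's own proof glosses over.
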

\begin{proof}
Denote $x^*$ as the fixed point.
Using the KM iteration, we have
\begin{align}
x_k = \left(\prod_{i=k-1}^{1}P_i \right) x_1
\end{align}
where $P_i = (1-\alpha_i)I+\alpha_i R$, $D = \mathrm{dist}(x_1,x^*)$.
Thus, we have
\begin{align}\label{eqn:l2main}
||x_k-x^*||_2 = ||x_k||_2 = \left\|\left(\prod_{i=k-1}^{1}P_i \right) x_1\right\|_2 \leq \left\|\left(\prod_{i=k-1}^{1}P_i \right)\right\|_2 || x_1||_2
\end{align}
where $P_i = \begin{bmatrix}
1-\alpha_i+\alpha_i \cos \theta &  -\alpha_i \sin \theta\\ 
\alpha_i \sin \theta & 1-\alpha_i+\alpha_i \cos \theta
\end{bmatrix}$.
We further found that
\begin{align*}
P_k^\top P_k &= [(1-\alpha_k)I+\alpha_k R^\top][(1-\alpha_k)I+\alpha_k R] \\
& = (1-\alpha_k)^2 I +\alpha_k(1-\alpha_k)(R+R^\top) + \alpha_k^2 R^\top R\\
& =(1-2\alpha_k+2\alpha_k^2)I+2\alpha_k(1-\alpha_k)\cos\theta I \\
& = (1-2\alpha_k+2\alpha_k^2+2\alpha_k(1-\alpha_k)\cos\theta)I.
\end{align*}
In the last two steps, we used the properties of the rotation matrix, i.e., $R^\top R = I$ and $R+R^\top = 2\cos\theta I$.
If $\alpha_i =\alpha \in (0,1)$, Eqn. \eqref{eqn:l2main} satisfies
\begin{align*}
||x_k-x^*|| \leq \left\|\left(\prod_{i=k-1}^{1}P_i \right)\right\| D \leq \left[\sqrt{\lambda_{\max}(P_i^\top P_i)}\right]^{k-1} D= \\
[(\alpha-1)^2+2\alpha(1-\alpha)\cos\theta+\alpha^2]^\frac{k-1}{2} D
\end{align*}
i.e.,
\begin{align*}
||x_k-x^*|| \leq [(\alpha-1)^2+2\alpha(1-\alpha)\cos\theta+\alpha^2]^\frac{k-1}{2} D.
\end{align*}
Define $g(\alpha) = (\alpha-1)^2+2\alpha(1-\alpha)\cos\theta+\alpha^2$, 
which can be rewritten as $g(\alpha) = (2-2\cos\theta) \alpha^2+(2\cos\theta-2)\alpha+1$. 
Taking derivative $g'(\alpha) = 2(2-2\cos\theta)\alpha+(2\cos\theta-2)=0$, we thus have $\alpha = 0.5$, which means when $\alpha=0.5$, $g(\alpha)$ is minimal with value $\frac{1+\cos\theta}{2}$.
This also means when the step size is 0.5, we have the fastest convergence speed.

Actually, we can find the exact solution for this iteration based on the fact that
\begin{align*}
x_{k+1}^\top x_{k+1}=||x_{k+1}||^2 = x_k^\top P_k^\top P_k x_k \\
=(1-2\alpha_k+2\alpha_k^2+2\alpha_k(1-\alpha_k)\cos\theta)x_{k}^\top x_k.
\end{align*}
This is exactly the bound derived before.
The coefficient $ (1-2\alpha_k+2\alpha_k^2+2\alpha_k(1-\alpha_k)\cos\theta) < 1$ because $\theta \in (0,2\pi)$ and thus $\cos\theta < 1$.
\end{proof}
\subsection{Problem Description: $l_\infty$ norm}
In this section, we consider the following KM iteration:
\begin{align}\label{eqn:inf_KM}
x_{k+1} = (1-\alpha)x_k +\alpha R x_k \frac{||x_k||_{\infty}}{||Rx_k||_{\infty}}.
\end{align}
We aim to prove the convergence of this iteration and how fast it will converge.
Without specific clarification, the norm notation in this subsection all means $\infty$-norm.
The results of $\theta \in(\pi,2\pi)$ are the same as that of $(0,\pi)$, which can be regarded as rotating counter-clockwise with $\theta$, so we only prove the result for $\theta \in(0,\pi)$. 
\begin{lemma}
The rotation operator under $l_\infty$ norm, i.e., $R_{\infty, \theta} = R\frac{||x||}{||Rx||}$ is non-expansive only when $\theta\in \Theta $ where $\Theta = \{\frac{\pi}{2},\pi\}$.
\end{lemma}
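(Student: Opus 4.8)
The plan is to prove the two directions of the equivalence separately, after recording two structural facts. Writing $R_{\infty,\theta}(x)=Rx\,\tfrac{\|x\|_\infty}{\|Rx\|_\infty}$ (and $R_{\infty,\theta}(0):=0$), the operator is positively homogeneous of degree one and \emph{norm preserving} — $\|R_{\infty,\theta}(x)\|_\infty=\|x\|_\infty$ for every $x$, straight from the definition — so geometrically it sends a nonzero $x$ to the unique point of the $l_\infty$ sphere of radius $\|x\|_\infty$ lying on the ray $\mathbb{R}_{>0}Rx$. The conceptual point to flag at the outset is that, unlike in the $l_2$ case, the normalization makes $R_{\infty,\theta}$ \emph{nonlinear}, so non-expansiveness is a genuine constraint that really does depend on $\theta$.

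\emph{Sufficiency} ($\theta\in\Theta$ implies non-expansive). For $\theta=\tfrac{\pi}{2}$ the matrix $R$ is a coordinate swap together with a sign change, and for $\theta=\pi$ it is $-I$; in both cases $R$ maps the $l_\infty$ unit ball $[-1,1]^2$ onto itself, hence $\|Rx\|_\infty=\|x\|_\infty$ for every $x$, the correction factor is identically $1$, and $R_{\infty,\theta}=R$ is linear with $\|R\|_\infty=1$. So for $\theta\in\Theta$ the operator is non-expansive — in fact an $l_\infty$-isometry.

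\emph{Necessity} (every other $\theta$ fails). I would use one universal counterexample pair, $x=e_1=(1,0)^\top$ and $y=e_2=(0,1)^\top$, for which $\|x-y\|_\infty=1$. Since $\|e_i\|_\infty=1$ we have $R_{\infty,\theta}(e_i)=Re_i/\|Re_i\|_\infty$, and because $Re_1=(\cos\theta,\sin\theta)^\top$ and $Re_2=(-\sin\theta,\cos\theta)^\top$ share the same pair of coordinate magnitudes, their normalizing denominators are equal, both $\max\{|\cos\theta|,|\sin\theta|\}$. Thus $R_{\infty,\theta}(e_1)-R_{\infty,\theta}(e_2)=\tfrac{1}{\max\{|\cos\theta|,|\sin\theta|\}}(\cos\theta+\sin\theta,\ \sin\theta-\cos\theta)^\top$, and using $\cos^2\theta+\sin^2\theta=1$ to evaluate the two maxima ($\max\{|\cos\theta+\sin\theta|,|\sin\theta-\cos\theta|\}^2=1+|\sin2\theta|$ and $\max\{|\cos\theta|,|\sin\theta|\}^2=\tfrac12(1+|\cos2\theta|)$) gives
\[
\|R_{\infty,\theta}(e_1)-R_{\infty,\theta}(e_2)\|_\infty=\sqrt{\frac{2\bigl(1+|\sin2\theta|\bigr)}{1+|\cos2\theta|}}\,.
\]
As $|\cos2\theta|\le 1\le 1+2|\sin2\theta|$, this is always $\ge 1$, with equality precisely when $\sin2\theta=0$, i.e.\ $\theta\in\{\tfrac{\pi}{2},\pi\}$ on $(0,\pi]$. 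Hence for every other angle the pair $(e_1,e_2)$ expands distance, and together with sufficiency this yields the stated characterization.

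The only step needing care is this last computation, and the device that keeps it clean is expressing both $l_\infty$ maxima through $\sin^2\theta+\cos^2\theta=1$ rather than doing a quadrant-by-quadrant case split on which coordinate dominates. (If one prefers to avoid even that, the isometric conjugations $R(-\theta)=JR(\theta)J$ and $R(\pi-\theta)=J'R(\theta)J$ with $J=\mathrm{diag}(1,-1)$ and $J'=\mathrm{diag}(-1,1)$ — both $l_\infty$-isometries, hence preserving non-expansiveness under pre/post-composition — reduce the angle range to a single sub-interval, but this is optional.) Apart from that, the proof is bookkeeping once one notices that $R_{\infty,\theta}$ is nonlinear yet norm preserving.
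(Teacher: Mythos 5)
Your proof is correct and follows essentially the same route as the paper: for $\theta\in\Theta$ the scaling factor is identically $1$ and $\|R\|_\infty=1$ so $R_{\infty,\theta}=R$ is non-expansive, and for other angles the pair $(e_1,e_2)$ witnesses expansion. The one difference worth noting is that the paper only computes the counterexample explicitly for $\theta=\pi/4$ and asserts that the remaining angles are handled similarly, whereas your identity $\|R_{\infty,\theta}(e_1)-R_{\infty,\theta}(e_2)\|_\infty^2=\tfrac{2(1+|\sin 2\theta|)}{1+|\cos 2\theta|}$ establishes strict expansion for every $\theta\notin\Theta$ at once, so your necessity argument is the more complete of the two.
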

\begin{proof}
When $\theta\in \Theta$, $\frac{||x_k||}{||Rx_k||}=1$ and $R_{\infty,\Theta}=R$.
Using the inequality in \eqref{eqn:matrixnorm}, we have $||R_{\infty,\Theta}x-R_{\infty,\Theta}y||=||Rx-Ry||\leq ||R||||x-y||$.
Since when $\theta\in \Theta$, $||R||= 1$. Thus, $R_{\infty,\Theta}$ is non-expansive.
A counter-example to show that $R_{\infty,\theta}$ for $\theta\notin \Theta$ is not non-expansive is easy to be found.
See the following figure with $\theta= \frac{\pi}{4}$.
It is obvious that $||R_{\infty,\theta}x-R_{\infty,\theta}y|| = ||[1,1]^\top-[-1,1]^\top|| = 2$ while $||x-y||=||[1,0]^\top-[0,1]^\top|| = 1$.
$||R_{\infty,\theta}x-R_{\infty,\theta}y||>||x-y||$.
So it is not non-expansive when $\theta=\frac{\pi}{4}$.
\begin{center}
\begin{tikzpicture}
\draw[thick,->] (0,0) -- (0,2)
node[anchor=east]{$y$};
\draw[thick,->] (-2,0) -- (2,0)
node[anchor=north]{$x$};
\draw[blue, thick,->] (0,0) -- (1,0)
node[anchor=north]{$1$}
node[anchor=south]{$x$}
node at (-1,-0.2) {$-1$};
\draw[red, thick,->] (0,0) -- (1,1)
node[anchor=west]{$R_{\infty,\theta}x$};
\draw[blue, thick,->] (0,0) -- (0,1)
node[anchor=west]{$1$}
node[anchor=east]{$y$};
\draw[red, thick,->] (0,0) -- (-1,1)
node[anchor=east]{$R_{\infty,\theta}y$};
\draw[gray, dashed] (1,0) -- (1,1);
\draw[gray, dashed] (0,1) -- (1,1);
\draw[gray, dashed] (-1,0) -- (-1,1);
\draw[gray, dashed] (0,1) -- (-1,1);
\end{tikzpicture}
\end{center}
\end{proof}
\begin{theorem}\label{thm:nonexpansive}
Denote $P_k = (1-\alpha)I+\alpha \frac{||x_k||_{\infty}}{||Rx_k||_{\infty}}R$.
The average operator $P_k$ is quasi-nonexpansive for all the $\theta \in (0,2\pi)$.
\end{theorem}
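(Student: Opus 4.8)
The plan is to reduce quasi-nonexpansiveness to a one-line convexity estimate. First I would pin down the relevant fixed point: as shown above, the origin $x^{\ast}=[0,0]^{\top}$ is the unique fixed point of $R$, and since $P_k$ is built from $R$ by a strictly positive rescaling followed by averaging with the identity, $x^{\ast}$ is still a fixed point of $P_k$ (in fact the only one, because for frozen $x_k$ the eigenvalues of $P_k$ are $(1-\alpha)+\alpha\frac{\|x_k\|_{\infty}}{\|Rx_k\|_{\infty}}e^{\pm i\theta}$, which never equal $1$ for $\theta\in(0,2\pi)$). Hence it suffices to prove $\|P_k x\|_{\infty}\le\|x\|_{\infty}$ for every $x$, which is exactly the statement $\|P_k x-x^{\ast}\|_{\infty}\le\|x-x^{\ast}\|_{\infty}$.

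Second, for $x\neq 0$ set $y:=\frac{\|x\|_{\infty}}{\|Rx\|_{\infty}}\,Rx$; the case $x=0$ is trivial since then $P_k 0=0=x^{\ast}$. The key observation is that this rescaling is engineered precisely so that the rotation's effect on the $\infty$-norm is cancelled: $\|y\|_{\infty}=\frac{\|x\|_{\infty}}{\|Rx\|_{\infty}}\,\|Rx\|_{\infty}=\|x\|_{\infty}$. So $P_k x=(1-\alpha)x+\alpha y$ is a genuine convex combination of two vectors of equal $\infty$-norm, because $\alpha\in(0,1)$.

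Third, I would invoke the triangle inequality and absolute homogeneity of the norm: $\|P_k x\|_{\infty}\le(1-\alpha)\|x\|_{\infty}+\alpha\|y\|_{\infty}=(1-\alpha)\|x\|_{\infty}+\alpha\|x\|_{\infty}=\|x\|_{\infty}$, which is the desired bound and holds for all $\theta\in(0,2\pi)$. Note the argument uses no special property of the $\infty$-norm, so it simultaneously re-proves the analogous $l_2$ claim.

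I do not expect a real obstacle here; the only care points are the degenerate case $x=0$, where $\|Rx\|_{\infty}=0$ makes the scaling factor undefined and which is dispatched by observing $x$ already equals the fixed point, and making precise the sense in which $P_k$ — an operator that depends on the current iterate — possesses a fixed-point set, which I would handle by regarding $x_k$ as a fixed parameter so that $P_k$ is an honest linear self-map of $\mathbb{R}^2$.
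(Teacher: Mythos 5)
Your proposal is correct and follows essentially the same route as the paper: both proofs rest on the observation that the rescaling factor $\frac{\|x_k\|_\infty}{\|Rx_k\|_\infty}$ makes the rotated term have the same $\infty$-norm as $x_k$, so the triangle inequality applied to the convex combination $(1-\alpha)x_k+\alpha\frac{\|x_k\|_\infty}{\|Rx_k\|_\infty}Rx_k$ immediately gives $\|P_kx_k\|_\infty\le\|x_k\|_\infty$, i.e.\ quasi-nonexpansiveness with respect to the fixed point $x^*=0$. Your added care about the degenerate case $x_k=0$ and the explicit identification of the fixed point are small refinements the paper omits, but they do not change the argument.
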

\begin{proof}
Using the triangle inequality of norm, we obtain
\begin{align*}
||x_{k+1}||=||P_k x_k|| &= ||\left((1-\alpha)I+\alpha \frac{||x_k||_{\infty}}{||Rx_k||_{\infty}}R \right) x_k|| \\
&\leq (1-\alpha)||x_k||+\alpha ||\frac{||x_k||_{\infty}}{||Rx_k||_{\infty}}Rx_k||\\
&=||x_k||.
\end{align*}
Using the definition in \cite{KHAN2016231}, we know that $P_k$ is a quasi-nonexpansive operator.
\end{proof}

\begin{theorem}
 The KM iteration \eqref{eqn:inf_KM} converges to the fixed point for all $\theta \in (0, \pi)$.  
\end{theorem}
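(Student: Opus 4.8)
My plan is to upgrade the monotonicity of $\|x_k\|_\infty$ supplied by Theorem~\ref{thm:nonexpansive} into a quantitative multi-step strict decrease, and then pass to the limit. Since $P_k$ is quasi-nonexpansive and $x^*=0$ is the unique fixed point, $a_k:=\|x_k\|_\infty=\mathrm{dist}(x_k,x^*)$ is non-increasing and bounded below by $0$, so $a_k\downarrow L$ for some $L\ge 0$; the whole game is to show $L=0$. Assuming $a_k>0$ for all $k$ (otherwise we are already done), write $x_k=a_k u_k$ with $\|u_k\|_\infty=1$. By homogeneity, $a_{k+1}=a_k\,\rho(u_k)$, where $\rho(u):=\big\|(1-\alpha)u+\alpha\,Ru/\|Ru\|_\infty\big\|_\infty\in(0,1]$ is continuous on the compact sphere $S=\{u:\|u\|_\infty=1\}$, and $u_{k+1}$ is the normalization of $(1-\alpha)u_k+\alpha Ru_k/\|Ru_k\|_\infty$. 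Hence $a_k=a_1\prod_{i=1}^{k-1}\rho(u_i)$, and it suffices to force $\prod_i\rho(u_i)\to 0$.

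The equality case of the triangle inequality used in the proof of Theorem~\ref{thm:nonexpansive} identifies the ``bad set'' $B:=\{u\in S:\rho(u)=1\}$ as exactly those $u$ for which $u$ and $Ru/\|Ru\|_\infty$ share a coordinate attaining the $l_\infty$-norm with the same sign; on $S\setminus B$ one has $\rho<1$. The heart of the argument is to produce $N=N(\theta,\alpha)\in\mathbb N$ and $c=c(\theta,\alpha)<1$ with $\prod_{i=m}^{m+N}\rho(u_i)\le c$ for every $m$. To get this I would use that $R$ commutes with rotation by $\pi/2$ and with the two coordinate reflections, all of which preserve $\|\cdot\|_\infty$, to reduce to the case $u_k=(1,t_k)$ with $t_k\in[-1,1]$; letting $\beta_k$ be the polar angle of $u_k$, the point $Ru_k/\|Ru_k\|_\infty$ has polar angle $\beta_k+\theta$, so $u_{k+1}$ (a convex combination with weight $\alpha$ on that point) has polar angle $\beta_{k+1}\in(\beta_k,\beta_k+\theta)$, and a short computation shows $\beta_{k+1}-\beta_k$ is bounded below by a positive multiple of $\alpha\theta$ whenever $u_k\in B$. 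Since $B$ subtends only a bounded arc of polar angles, after at most $N(\theta,\alpha)$ consecutive steps the trajectory must land in a fixed compact set $K\subset S\setminus B$ (the region where, after rotation, the dominant coordinate has changed index or sign), on which $c:=\max_{K}\rho<1$ by continuity and compactness.

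Granting this, $a_{m+N+1}=a_m\prod_{i=m}^{m+N}\rho(u_i)\le c\,a_m$, so $a_k\le a_1\,c^{\lfloor (k-1)/(N+1)\rfloor}\to 0$, forcing $L=0$ and $x_k\to 0=x^*$, as claimed. For $\theta\in[\pi/2,\pi)$ I expect $\rho<1$ to hold on all of $S$ (the rotation by at least $\pi/2$ always moves the dominant coordinate off any fixed face), in which case $N=0$ and the proof collapses to one step with an explicit rate; the genuinely delicate range is $\theta\in(0,\pi/2)$.

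The main obstacle is precisely the geometric step: converting ``$\rho=1$ only on the lower-dimensional, non-invariant set $B$'' into a uniform multi-step contraction. One must (i) bound the dwell time in $B$, which requires carefully tracking how the active face and the dominant index and sign of $Ru_k$ evolve under the iteration, and (ii) guarantee that when the trajectory leaves $B$ it does so by a definite margin rather than merely grazing $\partial B$, so that $c<1$ can be chosen uniformly over all starting points on $S$. The $\pi/2$-symmetry reduction and the polar-angle bookkeeping above are meant to keep this case analysis finite and quantitative, but carrying it out cleanly for every $\theta\in(0,\pi)$ and $\alpha\in(0,1)$ is where the real work lies.
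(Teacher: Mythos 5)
Your strategy is sound and is, in fact, more rigorous than what the paper itself offers for this statement. The paper's proof handles $\theta=\pi$ via $\|P_k\|_\infty=1-2\alpha$ and then, for every other $\theta$, simply invokes the quasi-nonexpansiveness of Theorem~\ref{thm:nonexpansive} --- which only yields that $\|x_k\|_\infty$ is non-increasing, exactly the point you correctly identify as insufficient (it gives $a_k\downarrow L$, not $L=0$). The quantitative content actually lives in the paper's later ``pseudo-period'' lemma, where a block of $T$ consecutive steps is shown, partly by geometric pictures and partly by a numerical search for $\beta_u$, to contract $\|x_i\|_\infty$ by a factor $\beta_u<1$. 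Your reduction to the unit sphere via homogeneity, the identification of the bad set $B$ through the equality case of the $l_\infty$ triangle inequality, and the plan to extract a uniform $N$-step contraction $c<1$ by continuity and compactness is precisely the rigorous version of that lemma; what it buys is a proof that does not rest on a brute-force search, at the price of a non-explicit constant.

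The step you flag as ``where the real work lies'' is closable, and the missing ingredient is an \emph{upper} bound on the angular advance, not only the lower bound you state. The advance $A(u)=\beta_{k+1}-\beta_k$ is continuous on the compact sphere $S$ with values in the open interval $(0,\theta)$, hence $A(u)\in[\delta,\theta-\delta']$ for some $\delta,\delta'>0$ depending only on $\alpha,\theta$. The components of $B$ are four arcs of angular length $\pi/2-\theta$ separated by gaps of length exactly $\theta$; shrinking each gap by $\epsilon<\delta'/2$ at both ends leaves a middle interval of length $\theta-2\epsilon>\theta-\delta'$, which no single step can jump over. Since $\beta_k$ advances by at least $\delta$ per step, the trajectory must land in one of these middle intervals --- i.e.\ in the compact set $K_\epsilon=\{u\in S:\mathrm{dist}(u,B)\ge\epsilon\}$ --- at least once every $N=\lceil 2\pi/\delta\rceil$ steps. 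This is exactly the ``definite margin'' ruling out your concern (ii) about grazing $\partial B$, and it gives $a_{k}\le a_1 c^{\lfloor (k-1)/N\rfloor}$ with $c=\max_{K_\epsilon}\rho<1$. One small correction: your expectation that $\rho<1$ on all of $S$ for $\theta\in[\pi/2,\pi)$ fails at the endpoint $\theta=\pi/2$, where $\rho$ equals $1$ at the corners (e.g.\ $u=(1,-1)$ maps under $R$ to $(1,1)$, sharing the signed max coordinate), which is consistent with the paper needing $\|P^2\|_\infty=1/2$ rather than a one-step contraction there; the general $N$-step argument covers this case without modification.
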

\begin{proof}
Denote 
\begin{align}
\gamma(x_k) = \frac{||x_k||_\infty}{||Rx||_\infty}.
\end{align}
It is obvious that $ \gamma\in \left[\frac{\sqrt{2}}{2}, \sqrt{2}\right]$ by using some geometric intuition.
We can write \eqref{eqn:inf_KM} as 
\begin{align}
x_{k+1} = P_k x_k
\end{align}
where $P_k$ can be written as
\begin{align}
P_k(x_k) = \begin{bmatrix}
1-\alpha+\alpha \gamma\cos\theta & -\alpha \gamma \sin\theta\\ 
 \alpha \gamma \sin\theta& 1-\alpha+\alpha\gamma\cos\theta 
\end{bmatrix}.
\end{align}
which means that $P_k$ is a variable depending on $x_k$.
If $\theta =\pi$, we got $||P_k||_\infty = 1-2\alpha<1$ which means $P_k$ is contractive in this case.
If $\theta \notin \Theta$, using Theorem \ref{thm:nonexpansive}, we know it will converge to the fixed point.
\end{proof}

\begin{theorem}
The finite sample bound for $\theta \in \Theta$ is
\begin{align} \label{eqn:infpiover2}
||x_k-x^*||_\infty = ||x_k||_\infty \leq \left\{\begin{matrix}
(1-2\alpha)^{k-1} D, \; \theta = \pi\\ 
(0.5)^{\floor*{\frac{k-1}{2}}}D, \; \theta = \frac{\pi}{2}
\end{matrix}\right. 
\end{align}
\end{theorem}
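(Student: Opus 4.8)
The plan is to use the observation, already recorded in the proof of the preceding non-expansiveness lemma, that for $\theta\in\Theta=\{\pi/2,\pi\}$ the scaling factor $\gamma(x_k)=\|x_k\|_\infty/\|Rx_k\|_\infty$ is identically $1$: a rotation by $\pi/2$ permutes and negates the two coordinates, while a rotation by $\pi$ negates both, so $\|Rx_k\|_\infty=\|x_k\|_\infty$ in either case. Hence the data-dependent averaged operator $P_k=(1-\alpha)I+\alpha\gamma(x_k)R$ collapses to the \emph{constant} matrix $P:=(1-\alpha)I+\alpha R$, the iteration \eqref{eqn:inf_KM} becomes the linear recursion $x_k=P^{k-1}x_1$, and since $x^*=[0,0]^\top$ by the fixed-point theorem,
\begin{equation*}
\|x_k-x^*\|_\infty=\|x_k\|_\infty\le\|P^{k-1}\|_\infty\,\|x_1\|_\infty=\|P^{k-1}\|_\infty\,D .
\end{equation*}
Everything then reduces to controlling the $l_\infty$ operator norm of powers of $P$, which I would do for the two angles separately.

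For $\theta=\pi$ one has $R=-I$, so $P=(1-2\alpha)I$, $x_k=(1-2\alpha)^{k-1}x_1$, and taking $l_\infty$ norms gives $\|x_k\|_\infty=|1-2\alpha|^{k-1}D$ with $|1-2\alpha|<1$ (since $\alpha\in(0,1)$), which is the first branch of \eqref{eqn:infpiover2}. This case is essentially immediate.

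For $\theta=\pi/2$ the matrix is $P=\begin{bmatrix}1-\alpha & -\alpha\\ \alpha & 1-\alpha\end{bmatrix}$, whose absolute row sums are $(1-\alpha)+\alpha=1$, so $\|P\|_\infty=1$ and a one-step submultiplicativity bound is vacuous; this is where the actual work lies. The trick is to advance two steps at a time: a direct computation gives $P^2=\begin{bmatrix}1-2\alpha & -2\alpha(1-\alpha)\\ 2\alpha(1-\alpha) & 1-2\alpha\end{bmatrix}$, hence $\|P^2\|_\infty=|1-2\alpha|+2\alpha(1-\alpha)$, which equals $1/2$ at the step size $\alpha=1/2$ (the minimizer of $g(\alpha)$ from the earlier $l_2$ analysis; indeed then $P^2=\tfrac12 R(\pi/2)$, so that in fact $\|P^{2m}x_1\|_\infty=2^{-m}\|x_1\|_\infty$). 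Writing $k-1=2m+j$ with $m=\floor*{\tfrac{k-1}{2}}$ and $j\in\{0,1\}$ and using submultiplicativity of the operator norm,
\begin{equation*}
\|x_k\|_\infty=\|P^{2m+j}x_1\|_\infty\le\|P\|_\infty^{j}\,\bigl(\|P^2\|_\infty\bigr)^{m}\,D=(0.5)^{m}D=(0.5)^{\floor*{\frac{k-1}{2}}}D,
\end{equation*}
which is the second branch of \eqref{eqn:infpiover2} and finishes the proof.

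The main obstacle I anticipate is exactly the $\theta=\pi/2$ step: because $\|P\|_\infty=1$, no contraction is visible after a single iteration, so one is forced to look two steps ahead, and the single leftover factor $\|P\|_\infty^{j}$ when $k$ is even is what forces the floor in the exponent rather than $(k-1)/2$. Two remarks I would add. First, the stated bound is tied to $\alpha=1/2$: the same two-step argument with a general constant $\alpha\in(0,1)$ yields $\|x_k\|_\infty\le\bigl(|1-2\alpha|+2\alpha(1-\alpha)\bigr)^{\floor*{(k-1)/2}}D$, and since the base equals $1-2\alpha^2$ for $\alpha\le\tfrac12$ and $1-2(1-\alpha)^2$ for $\alpha\ge\tfrac12$ it is $<1$ on all of $(0,1)$, so the iteration still converges geometrically for every constant step size, with $\alpha=1/2$ optimal. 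Second, I would verify the base case $k=1$ (both sides equal $D$) and check the parity bookkeeping for the first few $k$ so that the exponents line up correctly.
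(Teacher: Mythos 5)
Your proof is correct and follows essentially the same route as the paper's: reduce to the constant matrix $P=(1-\alpha)I+\alpha R$ since $\gamma\equiv 1$ on $\Theta$, handle $\theta=\pi$ via $P=(1-2\alpha)I$, and handle $\theta=\pi/2$ by bounding $\|P^{k-1}\|_\infty$ through $\|P^2\|_\infty=1/2$ at $\alpha=1/2$ with the floor absorbing the leftover factor $\|P\|_\infty=1$. Your version merely makes explicit the computation of $P^2$ and the parity bookkeeping that the paper leaves implicit.
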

\begin{proof}
We prove this theorem separately.
If $\theta \in \Theta$, for example, suppose $\theta = \pi$.
We have a rotation matrix
\begin{align}
R = \begin{bmatrix}
-1 & 0\\ 
0 & -1
\end{bmatrix}
\end{align}
and the property $\frac{||x_k||}{||Rx_k||}=1$, which makes the KM iteration the following
\begin{align}
x_{k+1} = ((1-\alpha)I+\alpha R)x_k\\
i.e., \; x_{k+1} = \begin{bmatrix}
1-2\alpha & 0\\ 
0 & 1-2\alpha
\end{bmatrix} x_k.
\end{align}
This will lead to
\begin{align}
x_k = (\underbrace{\begin{bmatrix}
1-2\alpha & 0\\ 
0 & 1-2\alpha
\end{bmatrix}}_P)^{k-1} x_1
\end{align}
thus,
\begin{align}
||x_k|| \leq ||P^{k-1}||||x_1||= (1-2\alpha)^{k-1} D
\end{align}
which means geometric convergence speed.

If $\theta = \frac{\pi}{2}$, we have
\begin{align}
x_k = (\underbrace{\begin{bmatrix}
1-\alpha & -\alpha\\ 
\alpha & 1-\alpha
\end{bmatrix}}_P)^k x_1.
\end{align}
We find that the optimal infinity norm of $||P^2||_{\infty} = \frac{1}{2}$ happens when $\alpha=\frac{1}{2}$.
Using the same logic, 
\begin{align}
||x_k|| \leq ||P^{k-1}|| ||x_1|| \leq ||P^2||^{\floor*{\frac{k-1}{2}}} D = (0.5)^{\floor*{\frac{k-1}{2}}}D.
\end{align}
The result for $\theta=\frac{3\pi}{2}$ is the same as that of $\theta = \frac{\pi}{2}$.
\end{proof}
Now we consider the same problem when $\theta \notin \Theta$.
The difficulty here lies in $\gamma$ is not a constant but a variable.
So we will write $P_k$ as $P_k(x_k)$.
First, let's see $\theta = \frac{3\pi}{4}$, we found that $||P_k(x_k)||_\infty = 0.5$ when $\alpha=0.5$.
This is obvious since $ ||P_k|| = |1-\alpha+\alpha \gamma(x_k)\cos(3\pi/4)| + |\alpha \gamma(x_k)\sin(3\pi/4)|$.
If $\alpha = 0.5$, this value is $1-\alpha-\alpha\gamma(x_k) \sqrt{2}/2+\alpha \gamma(x_k)\sqrt{2}/2=1-\alpha = 0.5$.
Thus,
\begin{align*}
||x_k|| \leq ||P^{k-1}||||x_1|| \leq ||P||^{k-1} D=(0.5)^{k-1} D.
\end{align*}

\begin{theorem} \label{thm:smallthetabd}
Fix the step size $\alpha=0.5$. 
Denote $\theta = \frac{p}{q}\pi$ and $p<q$ are integers.
The finite sample bound for $\theta \notin \Theta$ is
\begin{align}\label{eqn:infpiover4}
||x_k-x^*||_\infty \leq 
\left\{\begin{matrix}
\beta_u^{\floor*{(k-1)/T}} D,\; \theta\in (0,\pi/2)\\ 
\left(\frac{1+\tan(3\pi/4-\theta)}{2}\right)^{(k-1)} D,\; \theta\in(\pi/2,\pi)
\end{matrix}\right.
\end{align}
where $T = \ceil*{1/\frac{p}{q}}$ is called the pseudo-period for $\theta = \frac{p}{q}\pi$. 
$\beta_u < 1$ is a bound value dependent on the value of $\theta$.
\end{theorem}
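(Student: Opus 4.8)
The plan is to reduce everything to a scalar recursion for the $l_\infty$-radius. Since the fixed point is $x^*=0$ and the map $x\mapsto\gamma(x)Rx$ preserves the $l_\infty$ norm, the iteration \eqref{eqn:inf_KM} reads $x_{k+1}=\tfrac12 x_k+\tfrac12 w_k$ with $\|w_k\|_\infty=\|x_k\|_\infty$, so with $r_k:=\|x_k-x^*\|_\infty=\|x_k\|_\infty$ and $u_k:=x_k/r_k$ on the unit $l_\infty$ sphere $S$ we get $r_{k+1}=g(u_k)\,r_k$, where $g(u):=\bigl\|\tfrac12 u+\tfrac12 Ru/\|Ru\|_\infty\bigr\|_\infty\le 1$ (this inequality is exactly the triangle-inequality bound behind Theorem~\ref{thm:nonexpansive}). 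Hence $r_k=D\prod_{j=1}^{k-1}g(u_j)$, and the task is to bound this product along the orbit of the direction map $G\colon u\mapsto\bigl(\tfrac12 u+\tfrac12 Ru/\|Ru\|_\infty\bigr)/g(u)$ on $S$. Because $\rho R=R\rho$ for the quarter-turn $\rho$ and $\rho$ is an $l_\infty$-isometry, $g$ is invariant under $\rho$, so it suffices to study $g$ on one face, say $u=(1,t)$ with $t\in[-1,1]$ and $\arg(Ru)=\arg u+\theta$.

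For $\theta\in(\pi/2,\pi)$ I would establish the uniform bound $\sup_{u\in S}g(u)=\tfrac{1+\tan(3\pi/4-\theta)}{2}$, which immediately gives $r_k\le\bigl(\tfrac{1+\tan(3\pi/4-\theta)}{2}\bigr)^{k-1}D$. For $u=(1,t)$ one has $\arg u+\theta\in(\tfrac\pi4,\tfrac{5\pi}4)$, so $v:=Ru/\|Ru\|_\infty$ lies on the top or the left face; computing $g(u)$ piecewise in $t$ accordingly, the only $t$-dependent coordinate of $v$ has derivative $-1/(\cdot)^2$ and is therefore monotone, so the maximum of $g$ is attained at a corner $t=\pm1$, where $g=\frac{\sin\theta}{\sin\theta-\cos\theta}$. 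Finally $\frac{\sin\theta}{\sin\theta-\cos\theta}=\frac{1+\tan(3\pi/4-\theta)}{2}$ by a one-line trigonometric identity, and it is $<1$ precisely when $\cos\theta<0$ (and it equals $\tfrac12$ at $\theta=3\pi/4$ and $0$ at $\theta=\pi$, matching the earlier results).

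For $\theta\in(0,\pi/2)$ a single step may fail to contract, so the bound must be proved over blocks of $T=\lceil q/p\rceil=\lceil\pi/\theta\rceil$ steps. I would first record the angular dynamics $\arg u_{k+1}=\arg u_k+\omega(u_k)$ with $\omega(u)=\arg(1+\gamma(u)e^{i\theta})$ and $0<\omega_{\min}(\theta)\le\omega(u)\le\omega_{\max}(\theta)<\theta$, using $\gamma(u)\in[\tfrac{\sqrt2}{2},\sqrt2]$ and that $\gamma\mapsto\arg(1+\gamma e^{i\theta})$ increases from $0$ to $\theta$. Then, partitioning $S$ by $\arg u\bmod\tfrac\pi2$, one checks that $g(u)=1$ exactly on the ``bad arcs'', namely the $u$ for which $u$ and $Ru/\|Ru\|_\infty$ share a face, of angular width $\tfrac\pi2-\theta$; on the complementary ``good arcs'' of width $\theta$ one has $g(u)<1$, with $g=1$ only at the two endpoints and $g$ smallest at the midpoint. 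Fixing $\varepsilon\in\bigl(0,\tfrac12(\theta-\omega_{\max}(\theta))\bigr)$, let $K_\theta$ be the union of the good arcs shrunk by $\varepsilon$ at each end; then $K_\theta$ is compact and $\beta_u:=\max_{u\in K_\theta}g(u)<1$. The counting claim is: every $T$ consecutive iterates contain some $u_j\in K_\theta$. The orbit advances by at least $\omega_{\min}(\theta)>0$ per step, hence cannot remain in the union of bad arcs for $T$ steps; and once it enters a good arc, the bound $\omega(u)<\omega_{\max}(\theta)<\theta-2\varepsilon$ prevents it from skipping over $K_\theta$ while it traverses the near $\varepsilon$-margin. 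Granting the claim, the product of gains over each block of $T$ steps is $\le\beta_u$, so writing $k-1$ as $\lfloor(k-1)/T\rfloor$ full blocks plus a shorter remainder gives $r_k\le\beta_u^{\lfloor(k-1)/T\rfloor}D$.

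I expect the counting claim to be the main obstacle, and it is genuinely quantitative. The delicate point is that $\omega(u)$ is close to $\tfrac\theta2$---indeed slightly below $\tfrac\theta2$ exactly at the corners of $S$, which is where each bad arc is entered---while the good arcs have width only $\theta$; so one must verify that the advance $\sum\omega(u_j)$ accumulated over $T=\lceil\pi/\theta\rceil$ steps does exceed the bad-arc width $\tfrac\pi2-\theta$ by enough to reach $K_\theta$ at least once inside every window of length $T$. This is exactly where the pseudo-period $T$ (chosen so that $T\theta\ge\pi$) is used, and it needs a careful lower bound on $\omega$ along a bad-arc crossing together with the strict inequality $\omega_{\max}(\theta)<\theta$. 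The other ingredients---the reduction of the first paragraph, the one-variable optimization for $\theta\in(\pi/2,\pi)$, and the block-and-remainder bookkeeping---are routine.
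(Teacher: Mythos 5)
Your reduction to the scalar recursion $r_{k+1}=g(u_k)r_k$ on the unit $l_\infty$ sphere is valid (the map $x\mapsto\gamma(x)Rx$ is indeed positively homogeneous and $l_\infty$-norm-preserving), and your treatment of the branch $\theta\in(\pi/2,\pi)$ is essentially a cleaner, analytic version of what the paper does with a picture: the paper's Theorem on $\|x_{i+1}\|\le\frac{1+\tan(3\pi/4-\theta)}{2}\|x_i\|$ is proved by identifying the worst geometric configuration, and your piecewise one-variable optimization lands on the same constant (the identity $\frac{\sin\theta}{\sin\theta-\cos\theta}=\frac{1+\tan(3\pi/4-\theta)}{2}$ checks out, and the value is attained both at the face corners and at the parameters where $Ru$ hits a corner). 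The only cosmetic caveat there is that the maximizer need not be literally $t=\pm1$; it is attained at the endpoints of the monotonicity pieces, which include the points where $Ru$ crosses a corner, but the extremal value is the same.

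The genuine gap is the counting claim in the branch $\theta\in(0,\pi/2)$, and you have correctly identified it yourself: nothing in your argument shows that every window of $T=\lceil\pi/\theta\rceil$ steps contains an iterate in the shrunken good arc $K_\theta$. Your own lower bound on the angular advance is not strong enough for this. Writing $u=(1,\tan\psi)$ on the right face, one computes $\gamma(u)=\cos\psi/\cos(\psi+\theta)$, which is strictly less than $1$ for $\psi\in(-\pi/4,-\theta/2)$, i.e.\ on the first half of each bad arc the per-step advance $\omega(u)=\arg(1+\gamma(u)e^{i\theta})$ drops strictly below $\theta/2$; consequently the a priori worst-case number of steps to traverse a quadrant is $\pi/(2\omega_{\min})>\pi/\theta$, and the window length $T$ is not obviously sufficient. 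So the block bound $\beta_u^{\lfloor(k-1)/T\rfloor}D$ does not follow from what you have written. To be fair, the paper does not close this either: its Lemma~2 obtains $\beta_u$ by a brute-force numerical search over initial points on one face (invoking ``self-similarity'') and justifies the period $T$ by an informal extreme-case picture, so your compactness definition $\beta_u:=\max_{K_\theta}g<1$ is, if anything, a more principled starting point --- but as it stands both your argument and the paper's leave the quantitative heart of the $(0,\pi/2)$ case (the visit-every-$T$-steps claim and the value of $\beta_u$) unproven.
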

\begin{lemma}
For any $i>0$ and $\theta \in(0,\pi/2)$ with corresponding period $T$, the following condition holds:
\begin{align}
 \frac{1}{2}\left(1+\tan\left(\frac{\pi}{4}-\frac{\theta}{2}\right)\right)||x_i|| \leq ||x_{i+T}|| =|| P_{i+T} \cdots P_{i} x_i||_\infty, \\
 ||x_{i+T}|| =|| P_{i+T} \cdots P_{i} x_i||_\infty \leq \beta_u||x_i||.
\end{align}
\end{lemma}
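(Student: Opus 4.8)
The plan is to track the iterate through its polar angle and its $l_\infty$ radius -- this turns the vector recursion~\eqref{eqn:inf_KM} into a scalar recursion for the radius -- and then to estimate the product of radius ratios accumulated over one pseudo-period. Concretely, I would first reparametrize the orbit: write $x_k=r_k\,v(\phi_k)$, where $r_k=\|x_k\|_\infty$, $\phi_k$ is the polar angle of $x_k$, and $v(\phi)$ is the unique point of the unit $l_\infty$ sphere (a square) on the ray of polar angle $\phi$. Since $R$ rotates directions by $\theta$ while the factor $\|x_k\|_\infty/\|Rx_k\|_\infty$ restores $l_\infty$ norm $r_k$, one has $\frac{\|x_k\|_\infty}{\|Rx_k\|_\infty}Rx_k=r_k\,v(\phi_k+\theta)$, so with $\alpha=\tfrac12$,
\[
x_{k+1}=\tfrac{r_k}{2}\bigl(v(\phi_k)+v(\phi_k+\theta)\bigr).
\]
Thus $\phi_{k+1}$ is the polar angle of $v(\phi_k)+v(\phi_k+\theta)$, which lies strictly between $\phi_k$ and $\phi_k+\theta$ (using $\theta<\pi$), and
\[
\frac{r_{k+1}}{r_k}=\tfrac12\bigl\|v(\phi_k)+v(\phi_k+\theta)\bigr\|_\infty\le 1,
\]
with equality exactly when $v(\phi_k)$ and $v(\phi_k+\theta)$ lie on a common edge of the square -- which re-derives Theorem~\ref{thm:nonexpansive} geometrically. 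Telescoping gives $\|x_{i+T}\|_\infty/\|x_i\|_\infty=\prod_{k=i}^{i+T-1}\tfrac12\|v(\phi_k)+v(\phi_k+\theta)\|_\infty$, so the lemma reduces to a two-sided bound on this product.

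Next I would use the $D_4$ symmetry of the $l_\infty$ ball: $r_{k+1}/r_k$ depends only on $\phi_k\bmod\tfrac{\pi}{2}$ up to reflection, so I reduce to a fundamental sector. Since $\theta<\pi/2$, the two points $v(\phi_k),v(\phi_k+\theta)$ are either on a common edge or on two adjacent edges; in the latter case -- a \emph{corner-crossing} step -- they meet a corner at angular distances $\delta$ below and $\theta-\delta$ above it, and a direct computation yields
\[
\frac{r_{k+1}}{r_k}=\tfrac12\Bigl(1+\tan\bigl(\tfrac{\pi}{4}-\min(\delta,\theta-\delta)\bigr)\Bigr),
\]
which ranges over $[\tfrac12(1+\tan(\tfrac{\pi}{4}-\tfrac{\theta}{2})),\,1)$, attaining its minimum at the symmetric crossing $\delta=\theta/2$ -- this is exactly where the constant $\tfrac12(1+\tan(\tfrac{\pi}{4}-\tfrac{\theta}{2}))$ originates. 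Because every non-crossing step contributes a factor exactly $1$, the period product equals the product of the (few) crossing factors. The lower bound $\tfrac12(1+\tan(\tfrac{\pi}{4}-\tfrac{\theta}{2}))\|x_i\|\le\|x_{i+T}\|$ then follows once I show a pseudo-period contains, in the worst case, one effectively contracting crossing -- or, if it has more than one, that their product still stays above this value, since a near-symmetric crossing leaves the orbit near a corner and forces any later crossing within the same period to be nearly a factor $1$. The choice $T=\ceil*{q/p}$ with $\theta=\tfrac{p}{q}\pi$, which makes $T\theta$ land just past $\pi$, is precisely what caps the net rotation and hence the number of crossings. For the upper bound I would show every block of $T$ consecutive steps contains at least one crossing step whose factor is bounded away from $1$ by a $\theta$-dependent amount, and set $\beta_u$ to the resulting supremum; this needs a lower bound on the net angular advance over a pseudo-period, guaranteeing the orbit cannot stay on one edge for $T$ steps.

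The main obstacle is the angular dynamics itself: the per-step advance $\phi_{k+1}-\phi_k$ is not the constant $\theta$ but a number in $(0,\theta)$ that degenerates when $v(\phi_k)$ and $v(\phi_k+\theta)$ have very unequal Euclidean lengths (one near a corner, one near an edge midpoint), so one cannot simply assert $\phi_{i+T}-\phi_i=T\theta$. The crux is a uniform estimate showing the net advance over a pseudo-period is comparable to $T\theta$ (and at least about $\tfrac{\pi}{2}$), which simultaneously produces a genuinely contracting crossing (for the upper bound) and caps how many crossings occur (for the lower bound); the degenerate regimes $\theta$ near $\tfrac{\pi}{4}$ and $\theta$ near $0$, where $T$ and the crossing geometry become extreme, will require separate care.
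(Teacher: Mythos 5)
Your reduction is a genuinely different, and in fact more principled, route than the paper's: you turn the iteration into the scalar recursion $r_{k+1}/r_k=\tfrac12\|v(\phi_k)+v(\phi_k+\theta)\|_\infty$ and correctly identify that only \emph{corner-crossing} steps contract, with the exact factor $\tfrac12\bigl(1+\tan(\tfrac{\pi}{4}-\min(\delta,\theta-\delta))\bigr)$ minimized at the symmetric crossing $\delta=\theta/2$. This single-step analysis is correct and sharper than what the paper does (the paper reads the lower bound off one figure depicting the ``best progress'' configuration, and obtains $\beta_u$ by a brute-force numerical search over initial points on one edge of the square, justified only by an informal ``self-similarity'' claim). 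So your framework, if completed, would upgrade the lemma from a semi-empirical statement to a proof.

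However, the proposal is not a proof: the two steps you yourself flag as the crux are exactly where it is incomplete, and one of them is in real danger of failing as stated. For the lower bound, the product of the crossing factors over one pseudo-period is \emph{not} automatically at least $\tfrac12\bigl(1+\tan(\tfrac{\pi}{4}-\tfrac{\theta}{2})\bigr)$: since $T\theta\ge\pi$ the orbit can pass two corners in one period, and writing $f(s)=\tfrac12\bigl(1+\tan(\tfrac{\pi}{4}-s)\bigr)$ one checks that $f(\theta/4)^2=f(\theta/2)-\tfrac{\theta^2}{16}+O(\theta^3)<f(\theta/2)$, so two moderately asymmetric crossings would already violate the claimed bound unless the angular dynamics forbids that configuration. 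Your heuristic that a deep crossing parks the orbit near a corner and neutralizes the next crossing is plausible (indeed an exactly symmetric crossing lands $x_{k+1}$ on the diagonal), but you give no quantitative version of it, and that is precisely the content of the lower bound. For the upper bound, you still need a uniform positive lower bound on the net angular advance over $T$ steps to guarantee at least one crossing with factor bounded away from $1$; as you note, the per-step advance degenerates when one of $v(\phi_k)$, $v(\phi_k+\theta)$ sits near a corner, so this does not follow from $T\theta\ge\pi$. Until both estimates are supplied, the argument establishes only the trivial bounds $r_{i+T}\le r_i$ and $r_{i+T}\ge f(\theta/2)^{N}r_i$ with $N$ the (uncontrolled) number of crossings, not the lemma. (To be fair, the paper's own proof does not close these gaps either.)
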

\begin{proof}
 We will prove it from the geometric perspective.
Firstly, the lower bound can be derived from the following figure.
\begin{center}
\begin{tikzpicture}
\draw[thick,->] (0,-4) -- (0,4)
node[anchor=east]{$y$};
\draw[thick,->] (-4,0) -- (4,0)
node[anchor=north]{$x$};
\draw[black, thick] (-3,-3) rectangle (3,3);
\draw[blue, thick,->] (0,0) -- (3,0)
node at (2.5,-0.2) {$x_i$};
\draw[blue, thick,->] (0,0) -- (3,1.5)
node at (2.5,0.6) {$x_{i+T-1}$};
\filldraw[black] (3,0) circle (2pt) node[anchor=west]{A};
\filldraw[black] (3,1.5) circle (2pt) node[anchor=west]{B};
\filldraw[black] (3,2.25) circle (2pt) node[anchor=west]{C};
\filldraw[black] (3,3) circle (2pt) node[anchor=west]{D};
\draw[blue, thick,->] (0,0) -- (1.5,3)
node[anchor=south]{$Rx_{i+T-1}\gamma(x_{i+T-1})$};
\draw[gray, thick, dotted] (0,0) -- (3,3);
\draw[gray, thick] (3,1.5) -- (1.5,3);
\draw[blue, thick,->] (0,0) -- (2.25,2.25)
node at (2,1.5) {$x_{i+T}$};
\filldraw[black] (2.25,2.25) circle (2pt) node at (1.8, 2.25) {E};
\draw[->,>=stealth',semithick] (25.6:1.2cm) arc[radius=1.2, start angle=26, end angle=65];
\node at (1,1) {$\theta$};
\draw[red, thick,dotted] (2.25,2.25) -- (3,2.25);
\end{tikzpicture}
\end{center}
Denote $||x_i|| = r$.
As we can see, the \textbf{best} progress this iteration can make happens when the line connecting $x_{i+T-1}$ and $Rx_{t+T-1}\gamma(x_{i+T-1})$ is perpendicular to the gray line, which gives us $|CD| = 1/2|BD| = 1/2(r-|AB|)$,
and thus,
\begin{align*}
||x_i||-||x_{i+T}\| = |CD| \geq \frac{1}{2}\left[1-\tan\left(\frac{\pi}{4}-\frac{\theta}{2}\right)\right]||x_i||\\
\Rightarrow ||x_{i+T}|| \geq  \frac{1}{2}\left[1+\tan\left(\frac{\pi}{4}-\frac{\theta}{2}\right)\right]||x_i||.
\end{align*}
To make the notation simple, we denote $\beta_l = \frac{1}{2}\left[1+\tan\left(\frac{\pi}{4}-\frac{\theta}{2}\right)\right]$.

Secondly, the upper bound $\beta_u$ depends on the rotation angle $\theta$.
Regarding the KM iteration in \eqref{eqn:inf_KM} as a geometric process in the above square, we can imagine a vector starts from an initial position $x_k$, then rotates with angle $\theta$ to obtain $Rx_k$, next scales to map to the square $Rx_k\frac{||x_k||}{||Rx_k||}$, finally is taken average with the original vector $x_k$ to obtain the new vector $x_{k+1}$.
The vector will shrink into a new smaller square if the initial vector meets with the corner of the original square, which we call ``making progress''.
Imagining the extreme case, the initial vector starts from one corner of the above, this iteration will definitely make progress after $\frac{\pi/2}{1/2\theta}=\frac{\pi}{\theta}$ steps.
That's how we define the period $T$.
\begin{definition}[self-similarity]
The bound $\beta_u= \max\left\{\frac{||x_{i+T||}}{||x_i||}\right\}$ for KM iteration \eqref{eqn:inf_KM} with a fixed initial point $x_1$, is less than or equal to the $\beta_u=\max\left\{\frac{||x_{1+T||}}{||x_1||}\right\}$ where $x_1$ is taken with all the value in a lateral of the above square.
\end{definition}
As we know, for $\theta=\pi/2$, the $\beta_u$ is derived as $1/2$.
However, for the other angles, the $\beta_u$ is obtained by the numerical search.
Table \ref{tab:betau} shows their relationship.
Fixed $\theta$, we found that the value of $\frac{||x_{i+T||}}{||x_i||}$ depends on $||x_i||$.
That being said, if $||x_i||$ is confirmed, then $||x_{i+T}||$ is confirmed.
Due to the self-similarity of this iteration process, we thus only need to consider the bound of $\frac{||x_{1+T}||}{||x_1||}$ with $x_1$ starting from different positions in a lateral of the above square.
By doing a brute force search for $x_1$ in a lateral of the square and then due to the self-similarity, we find the bound for each angle with the precision of 0.0001.
\begin{table}[!htp]
    \centering
    \begin{tabular}{|c|c|} \hline
     $\theta$ & $\beta_u$ \\ \hline \hline 
     $\pi/12$ & 0.8974\\  \hline
      $\pi/6$ & 0.8211\\  \hline
      $\pi/4$ &0.7504  \\  \hline
      $\pi/3$ & 0.6830 \\  \hline
        $\pi/2$& 0.5 \\  \hline
    \end{tabular}
    \caption{Rotation angle and their corresponding bound.}
    \label{tab:betau}
\end{table}

Thus, we finally proved the conclusion:
\begin{align}
\beta_l ||x_i|| \leq ||x_{i+T}||\leq \beta_u ||x_i||.
\end{align}
\end{proof}

After lemma 2 is proved, we can then prove Theorem \ref{thm:smallthetabd} directly.
For $x_k =P_{k-1}  ...P_2 P_1  x_1$, we group the multiplication of $P_i$ with the period $T$ and using the non-expansiveness of $P_i$, then we can obtain the bound directly
\begin{align*}
||x_k|| \leq \beta_u^{\floor*{(k-1)/T}}D .
\end{align*}
For example, when $\theta = \pi/3$, the period is $T=3$.
For $k=8$,
\begin{align*}
||x_8|| \leq \beta_u ||x_5|| \leq \beta_u \beta_u ||x_2|| \leq \beta_u^2 ||x_1||
\end{align*}
where the last inequality used the quasi-nonexpansive property of $P_1$ that says $||x_2||=||P_1 x_1|| \leq ||x_1||$.

For $k=9$, we have
\begin{align*}
||x_9|| = ||P_8...P_1 x_1|| \leq \beta_u ||x_6|| \leq \beta_u\beta_u ||x_3|| \leq \beta_u^2 ||x_1||
\end{align*}
where the last step used the quasi-non-expansive property $||x_3|| = ||P_2x_2||\leq ||x_2||=||P_1x_1||\leq ||x_1||$.

Now, consider $\theta \in (\pi/2, \pi)$.
As we know, when $\theta >\pi/2$, every iteration will make progress.
\begin{theorem}
\begin{align}
||x_{i+1}|| \leq  \frac{1+\tan(3\pi/4-\theta)}{2}  ||x_i||.
\end{align}
\end{theorem}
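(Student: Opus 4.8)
The plan is to read one step of the iteration~\eqref{eqn:inf_KM} with $\alpha=\frac12$ as a geometric operation on the $l_\infty$-ball. Writing $\gamma(x)=\|x\|_\infty/\|Rx\|_\infty$ and $y_i:=\gamma(x_i)Rx_i$, the update is $x_{i+1}=\frac12(x_i+y_i)$, and the key fact is that $\|y_i\|_\infty=\gamma(x_i)\|Rx_i\|_\infty=\|x_i\|_\infty$. Hence, with $r:=\|x_i\|_\infty$, both $x_i$ and $y_i$ lie on the boundary of the square $S_r=\{z:\|z\|_\infty\le r\}$; $y_i$ is the point of $\partial S_r$ whose polar angle exceeds that of $x_i$ by $\theta$ (rotate $x_i$ by $\theta$, then rescale radially back to $\partial S_r$); and $x_{i+1}$ is the midpoint of the chord joining $x_i$ and $y_i$. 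So it suffices to prove the following: if $u,v\in\partial S_r$ have polar angles differing by $\theta\in(\frac\pi2,\pi)$, then $\|\frac12(u+v)\|_\infty\le\frac12\big(1+\tan(\frac{3\pi}{4}-\theta)\big)r$.

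First I would cut down the configuration by symmetry. The update map $\Phi(x)=\frac12(x+\gamma(x)Rx)$ commutes with the quarter-turn $Q=R(\pi/2)$, since $R$ and $Q$ commute and $\gamma,\|\cdot\|_\infty$ are $Q$-invariant; therefore $\|\Phi(Qx)\|_\infty=\|\Phi(x)\|_\infty$, and combining this with positive homogeneity I may assume $r=1$ and that $x_i=(1,\tan\phi)$ sits on the right edge, i.e. its polar angle is $\phi\in[-\frac\pi4,\frac\pi4]$. Because $\theta\in(\frac\pi2,\pi)$, the polar angle $\phi+\theta$ of $y_i$ lies in $(\frac\pi4,\frac{5\pi}{4})$, so $y_i$ is on the top edge when $\phi<\frac{3\pi}{4}-\theta$ and on the left edge when $\phi>\frac{3\pi}{4}-\theta$; I treat these two cases separately.

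In the left-edge case $y_i=(-1,-\tan(\phi+\theta))$, so $x_i+y_i=(0,\tan\phi-\tan(\phi+\theta))$, and since $\tan\phi-\tan(\phi+\theta)=\sin\theta/\big(-\cos\phi\cos(\phi+\theta)\big)>0$ the problem becomes minimizing $D(\phi)=-\cos\phi\cos(\phi+\theta)=-\frac12\big(\cos\theta+\cos(2\phi+\theta)\big)$ over the admissible range $\phi\in(\frac{3\pi}{4}-\theta,\frac\pi4]$. On that range $2\phi+\theta$ runs over an interval straddling $\pi$ whose two endpoints both have cosine $-\sin\theta$, so $\max\cos(2\phi+\theta)=-\sin\theta$, attained at $\phi=\frac\pi4$; hence $D(\phi)\ge\frac12(\sin\theta-\cos\theta)$ and $\|x_i+y_i\|_\infty\le\frac{2\sin\theta}{\sin\theta-\cos\theta}=1+\tan(\frac{3\pi}{4}-\theta)$ after simplifying by the tangent subtraction formula, with equality at the corner $x_i=(1,1)$. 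In the top-edge case $y_i=(\cot(\phi+\theta),1)$, both coordinates of $x_i+y_i=(1+\cot(\phi+\theta),\,1+\tan\phi)$ are nonnegative, so $\|x_i+y_i\|_\infty=1+\max\{\tan\phi,\cot(\phi+\theta)\}$; here $\phi<\frac{3\pi}{4}-\theta$ gives $\tan\phi<\tan(\frac{3\pi}{4}-\theta)$, and $\phi\ge-\frac\pi4$ gives $\cot(\phi+\theta)=\tan(\frac\pi2-\phi-\theta)\le\tan(\frac{3\pi}{4}-\theta)$, so again $\|x_i+y_i\|_\infty\le1+\tan(\frac{3\pi}{4}-\theta)$. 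Dividing by $2$ and undoing the normalization $r=1$ gives the theorem.

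I expect the only real work to be in the left-edge case: identifying the correct range of $\phi$, checking that $-\cos\phi\cos(\phi+\theta)$ attains its minimum on that range at the corner $\phi=\frac\pi4$ with value $\frac12(\sin\theta-\cos\theta)$, and confirming the identity $\frac{2\sin\theta}{\sin\theta-\cos\theta}=1+\tan(\frac{3\pi}{4}-\theta)$. The symmetry reduction and the top-edge estimate are short and in the same geometric spirit as the proof of Lemma~2.
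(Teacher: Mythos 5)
Your proposal is correct, and it is substantially more rigorous than the paper's own argument. The paper proves this bound by drawing a single configuration ($x_i$ on the right edge at polar angle $\psi=3\pi/4-\theta$, with $\gamma(x_i)Rx_i$ landing exactly on the corner $D$), reading off $|BC|=(1-\tan\psi)r$ from the figure, and \emph{asserting} that this is the minimum jump, adding that the claim is ``verified with simulation.'' You instead carry out a complete worst-case analysis: the symmetry reduction via the quarter-turn $Q$ and homogeneity is valid (indeed $\Phi(Qx)=Q\Phi(x)$ since $RQ=QR$ and $\gamma$, $\|\cdot\|_\infty$ are $Q$-invariant), the two cases (image on the top edge versus the left edge, split at $\phi=3\pi/4-\theta$) exhaust all positions of $y_i$ because $\phi+\theta\in(\pi/4,5\pi/4)$, the product-to-sum optimization in the left-edge case correctly locates the extremum at the corner $\phi=\pi/4$, and the identity $\frac{2\sin\theta}{\sin\theta-\cos\theta}=1+\tan(3\pi/4-\theta)$ checks out. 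What your approach buys is an actual proof that no configuration does worse, plus the exact equality case ($x_i$ at a corner of the square) — note that this is a \emph{different} extremal configuration from the one the paper depicts, though both attain the same value, which is consistent. The paper's approach buys only brevity and geometric intuition; as written it also contains a sign slip in the intermediate step ($\|x_{i+1}\|\leq(1+\frac{1-\tan\psi}{2})r$ should read $\|x_{i+1}\|\leq(1-\frac{1-\tan\psi}{2})r$), though its final bound agrees with yours.
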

\begin{proof}
This can be proved using geometry and is verified with simulation. 
In the following figure, $\psi=\pi-\theta-\pi/4 = 3\pi/4-\theta$. Denote $r=||x_i||$.
$|AC| = r \tan\psi$. $|BC|=(1-\tan\psi)r$.
The minimum jump is $|BE| = \frac{|BC|}{2}=(1-\tan\psi)r/2$. $E$ is the middle point of $BC$ because $FE \parallel DB$ and $F$ is the middle point of $CD$.
Thus,
\begin{align*}
||x_i|| - ||x_{i+1}|| &\geq (1-\tan\psi)r/2 \\
& \Rightarrow ||x_{i+1}|| \leq (1+\frac{1-\tan\psi}{2})r = \frac{1+\tan\psi}{2}||x_i||.
\end{align*}
\begin{center}
\begin{tikzpicture}
\draw[thick,->] (0,-4) -- (0,4)
node[anchor=east]{$x_2$};
\draw[thick,->] (-4,0) -- (4,0)
node[anchor=north]{$x_1$};
\draw[black, thick] (-3,-3) rectangle (3,3);
\draw[blue, thick,->] (0,0) -- (3,0.8038)
node  at (2.6,0.3) {$x_i$};
\draw[blue, thick,->] (0,0) -- (-3,3)
node[anchor=south]{$Rx_{i}\gamma(x_{i})$};
\draw[gray, thick,->] (3,0.8038) -- (-3,3);
\draw[blue, thick,->] (0,0) -- (0,1.9019)
node at (0.3, 1.5) {$x_{i+1}$};
\filldraw[black] (0,1.9019) circle (2pt) node[anchor=east]{F};
\draw[gray, dashed] (0,1.9019) -- (3, 1.9019)
node[anchor=west]{E};
\draw[->,>=stealth',semithick] (15:1.2cm) arc[radius=1.2, start angle=15, end angle=135];
\node at (1,1) {$\theta$};
\draw[->,>=stealth',semithick] (0:1.2cm) arc[radius=1.2, start angle=0, end angle=15];
\node at (1.5,0.2) {$\psi$};
\filldraw[black] (3,0) circle (2pt) node[anchor=west]{A};
\filldraw[black] (3,3) circle (2pt) node[anchor=west]{B};
\filldraw[black] (3,0.8038) circle (2pt) node[anchor=west]{C};
\filldraw[black] (-3,3) circle (2pt) node[anchor=east]{D};
\end{tikzpicture}
\end{center}
\end{proof}

\section{With noise} \label{sec:noise}
In this scenario, we follow the following assumption for noise:
\begin{assumption}\label{ass:noise}
Let $\mathcal{F}_k$ denote the $\sigma$-algebra generated by sequence $\{ x_0, \omega_0, x_1, \omega_1, \cdots,x_{k-1}, \omega_{k-1}, x_k\}$.
The noise sequence $\{ \omega_k\}$ satisfies for all $k\geq 0$,
\begin{enumerate}
    \item $\mathbb{E}[\omega_k|\mathcal{F}_k]=0$.
    \item $\mathbb{E}[\left \| \omega_k \right \|_c^2|\mathcal{F}_k] \leq A + B \left \| x_k \right \|_c^2$ for some constant $A>0$ and $B\geq 0$.
\end{enumerate}
where the subscript $c$ denotes $c$-norm.
\end{assumption}

\subsection{problem description: $l_2$ norm with noise}
The KM iteration for the rotation operator with noise is 
\begin{align}
x_{k+1} = x_k +\alpha_k(Rx_k-x_k +\omega_k) \nonumber\\
\mathrm{i.e.,} \; x_{k+1} = (1-\alpha_k)x_k +\alpha_k(Rx_k +\omega_k)
\end{align}
where $\omega_k$ is the noise sequence that satisfies Assumption \ref{ass:noise}.
We can write this system as
\begin{align*}
x_{k+1} = \begin{bmatrix}
1-\alpha_k+\alpha_k\cos\theta &-\alpha_k\sin\theta \\ 
\alpha_k\sin\theta & 1-\alpha_k+\alpha_k\cos\theta
\end{bmatrix}x_k+\alpha_k \omega_k
\end{align*}
\begin{theorem}
The finite sample bound for constant step size is
\begin{align}\label{eqn:2normnoise}
E[||x_k-x^*||_2^2] \leq (\mu+\alpha^2B)^{k-1} D^2 +A\alpha^2 \frac{1-(\mu+\alpha^2B)^{k-1}}{1-(\mu+\alpha^2B)} 
\end{align}
where $0<B < (1-\mu)/\alpha^2$.
\end{theorem}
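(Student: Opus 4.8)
The plan is to track the squared $l_2$-distance $\|x_k - x^*\|_2^2 = \|x_k\|_2^2$ (using $x^* = 0$) through one step of the iteration, take conditional expectation given $\mathcal{F}_k$, and then unroll the resulting scalar recursion. First I would expand $\|x_{k+1}\|_2^2 = \|P x_k + \alpha \omega_k\|_2^2 = \|P x_k\|_2^2 + 2\alpha \langle P x_k, \omega_k\rangle + \alpha^2 \|\omega_k\|_2^2$, where $P = (1-\alpha)I + \alpha R$. Taking $\mathbb{E}[\cdot\,|\,\mathcal{F}_k]$, the cross term vanishes by Assumption \ref{ass:noise}(1) since $P x_k$ is $\mathcal{F}_k$-measurable, and I would bound $\mathbb{E}[\|\omega_k\|_2^2\,|\,\mathcal{F}_k] \leq A + B\|x_k\|_2^2$ by Assumption \ref{ass:noise}(2). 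For the term $\|P x_k\|_2^2 = x_k^\top P^\top P x_k$, I would invoke the earlier computation that $P^\top P = \mu I$ with $\mu := (1-2\alpha+2\alpha^2+2\alpha(1-\alpha)\cos\theta)$, so $\|P x_k\|_2^2 = \mu \|x_k\|_2^2$. Combining gives
\begin{align*}
\mathbb{E}[\|x_{k+1}\|_2^2 \mid \mathcal{F}_k] \leq (\mu + \alpha^2 B)\|x_k\|_2^2 + \alpha^2 A.
\end{align*}

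Next I would take total expectations to get the deterministic recursion $a_{k+1} \leq \rho\, a_k + \alpha^2 A$ with $a_k := \mathbb{E}[\|x_k\|_2^2]$ and $\rho := \mu + \alpha^2 B$. Unrolling this affine recursion from $k=1$ yields
\begin{align*}
a_k \leq \rho^{k-1} a_1 + \alpha^2 A \sum_{j=0}^{k-2} \rho^{j} = \rho^{k-1} D^2 + \alpha^2 A \,\frac{1 - \rho^{k-1}}{1-\rho},
\end{align*}
using $a_1 = \|x_1 - x^*\|_2^2 = D^2$ and the finite geometric sum formula, which requires $\rho \neq 1$. This is exactly \eqref{eqn:2normnoise} with $\rho = \mu + \alpha^2 B$.

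The main point requiring care is the hypothesis $0 < B < (1-\mu)/\alpha^2$, which is equivalent to $\rho = \mu + \alpha^2 B < 1$ (together with $\mu < 1$, already established in the noiseless theorem since $\cos\theta < 1$ for $\theta \in (0,2\pi)$). This condition guarantees $\rho \in (0,1)$, so the geometric sum stays bounded and the bound is meaningful — as $k \to \infty$ it converges to the steady-state level $\alpha^2 A/(1-\rho)$. I would remark that positivity $\rho > 0$ holds automatically since $\mu > 0$ and $B \geq 0$. There is no serious obstacle here: the only subtlety is making sure the conditional-expectation bookkeeping is clean (the noise $\omega_k$ is conditionally mean-zero and its second moment is controlled by the current iterate, both of which are $\mathcal{F}_k$-measurable by construction), and that the $P^\top P = \mu I$ identity — already proven in the deterministic case — transfers verbatim since the noisy iteration uses the same matrix $P$.
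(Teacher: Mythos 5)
Your proposal is correct and follows essentially the same route as the paper: expand $\|x_{k+1}\|_2^2$, kill the cross term via the conditional mean-zero assumption, bound the noise second moment by $A+B\|x_k\|_2^2$, use $P^\top P=\mu I$, and unroll the resulting affine recursion with $\rho=\mu+\alpha^2 B<1$. The only (harmless) nuance is your remark that $\mu>0$ "automatically": since $\mu=(1-\alpha)^2+\alpha^2+2\alpha(1-\alpha)\cos\theta$, it can vanish at $\alpha=1/2,\ \theta=\pi$, but positivity of $\rho$ is not actually needed for the bound.
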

\begin{proof}
As we already know, $P_k$ is non-expansive for all $\theta$.
\begin{align}
||x_{k+1}||^2 = x_k^\top P_k^\top P_k x_k+2\alpha_k\omega_k^\top P_k x_k +\alpha_k^2\omega_k^\top\omega_k
\end{align}
where
\begin{align*}
P_k^\top P_k = (1-2\alpha_k+2\alpha_k^2+2\alpha_k(1-\alpha_k)\cos\theta)I.
\end{align*}
Denote $\mu_k = (1-2\alpha_k+2\alpha_k^2+2\alpha_k(1-\alpha_k)\cos\theta)$ and we know $\mu_k<1$ holds for all $\theta\in(0,2\pi)$.
Thus,
\begin{align*}
&E[||x_{k+1} ||^2\vert \mathcal{F}_k]\\ 
&= x_k^\top P_k^\top P_k x_k + 2 \alpha_k E[\omega_k^\top \vert \mathcal{F}_k]P_k x_k + \alpha_k^2 E[\omega_k^\top \omega_k\vert \mathcal{F}_k]\\
&\leq \mu_k ||x_k||^2  +\alpha_k^2 E[\omega_k^\top \omega_k\vert \mathcal{F}_k]\\
&\leq (\mu_k+\alpha_k^2 B)||x_k||^2+A\alpha_k^2.
\end{align*}
Taking the total expectation and telescoping, we have
\begin{align*}
E[||x_{k+1}||^2] 
& \leq (\mu+\alpha^2B)^k||x_1||^2 +A\alpha^2 \frac{1-(\mu+\alpha^2B)^k}{1-(\mu+\alpha^2B)}. 
\end{align*}
supposing $\alpha_k=\alpha$.
If $B\neq 0$, to make the system converge, $\mu+B\alpha_k^2 < 1 \Rightarrow B<(1-\mu)/\alpha_k^2$.
\end{proof}

 
\section{Simulation}\label{sec:simulation}
In this section, we will provide some simulations to verify our above theoretical work one by one.
Our experiments are executed from the following aspects: (1) $l_2$ norm; (2) $l_\infty$ norm; (3) $l_2$ norm with noise; (4) $l_\infty$ with noise.
In all of the following settings, without specific claiming, the initial position is $x_0=[10, 30]^\top$.
\subsection{$l_2$ norm}
Different step sizes are used to test the convergence.
When $\alpha_k = \alpha$, that is constant step size, it converges to the fixed point with geometrical speed.
When $\alpha_k = \frac{1}{\log k}$ and $\alpha_k = \frac{1}{\sqrt{k}}$, it also converges to fixed point.
However, when $\alpha_k=\frac{1}{k}$ (i.e., diminishing step size), there is a constant error for $\theta \notin \Theta$.
Another phenomenon is the convergence speed of $\log$ step size is faster than that with the square root step size.
Fig.\ref{fig:l2} shows the trajectory and the value of $||x_k||_2$ using constant step size.
We can see that the system converges to its fixed point $[0,0]^\top$ geometrically fast as proved in Section \ref{sec:problem}.
\begin{figure}[!htp]
    \centering
    \includegraphics[width=\linewidth]{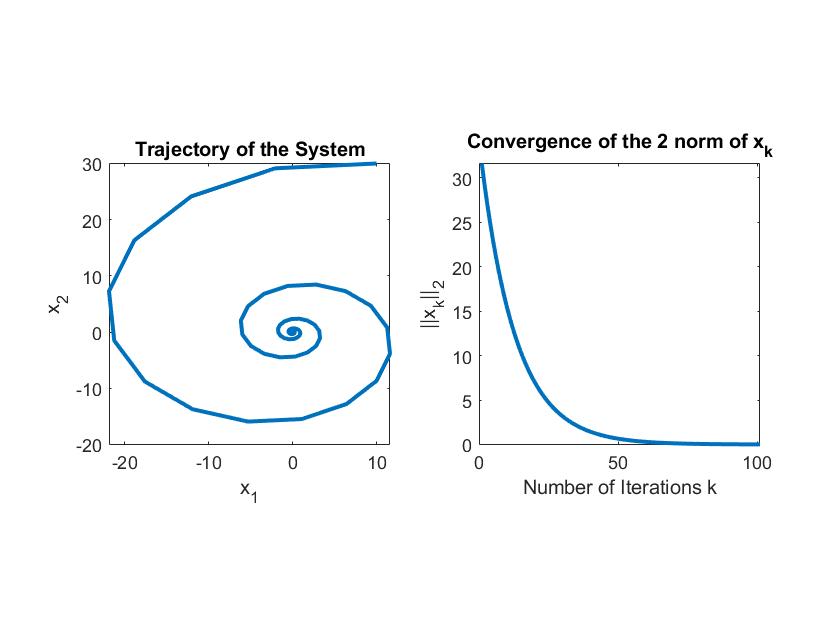}
    \caption{$l_2$: Trajectory and convergence results under constant step size $\alpha = 0.5$, $\theta=\pi/4$.}
    \label{fig:l2}
\end{figure}

\subsection{$l_\infty$ norm}
The phenomenon is the same: When $\alpha_k = \alpha$, that is constant step size, it converges to the fixed point $[0,0]^\top$.
When $\alpha_k = \frac{1}{\log k}$ and $\alpha_k = \frac{1}{\sqrt{k}}$, it also converges to fixed point.
However, when $\alpha_k=\frac{1}{k}$ (i.e., diminishing step size), there is a constant error and it can not converge to the fixed point for all $\theta \notin \Theta$.
Fig. \ref{fig:linfinity_theta_90} and Fig. \ref{fig:linfinity_theta_45} show the finite-sample bound for $\theta = \pi/2$ and $\theta=\pi/4$ respectively.
As we can see, the derived finite-sample bound \eqref{eqn:infpiover2} and \eqref{eqn:infpiover4} can approximate the real trajectory really well.
Fig. \ref{fig:linfinity_theta_120} is the trajectory and convergence results when $\theta=2\pi/3$, which also verified our obtained theoretic results \eqref{eqn:infpiover4}.
\begin{figure}[!htp]
    \centering
\includegraphics[width=\linewidth]{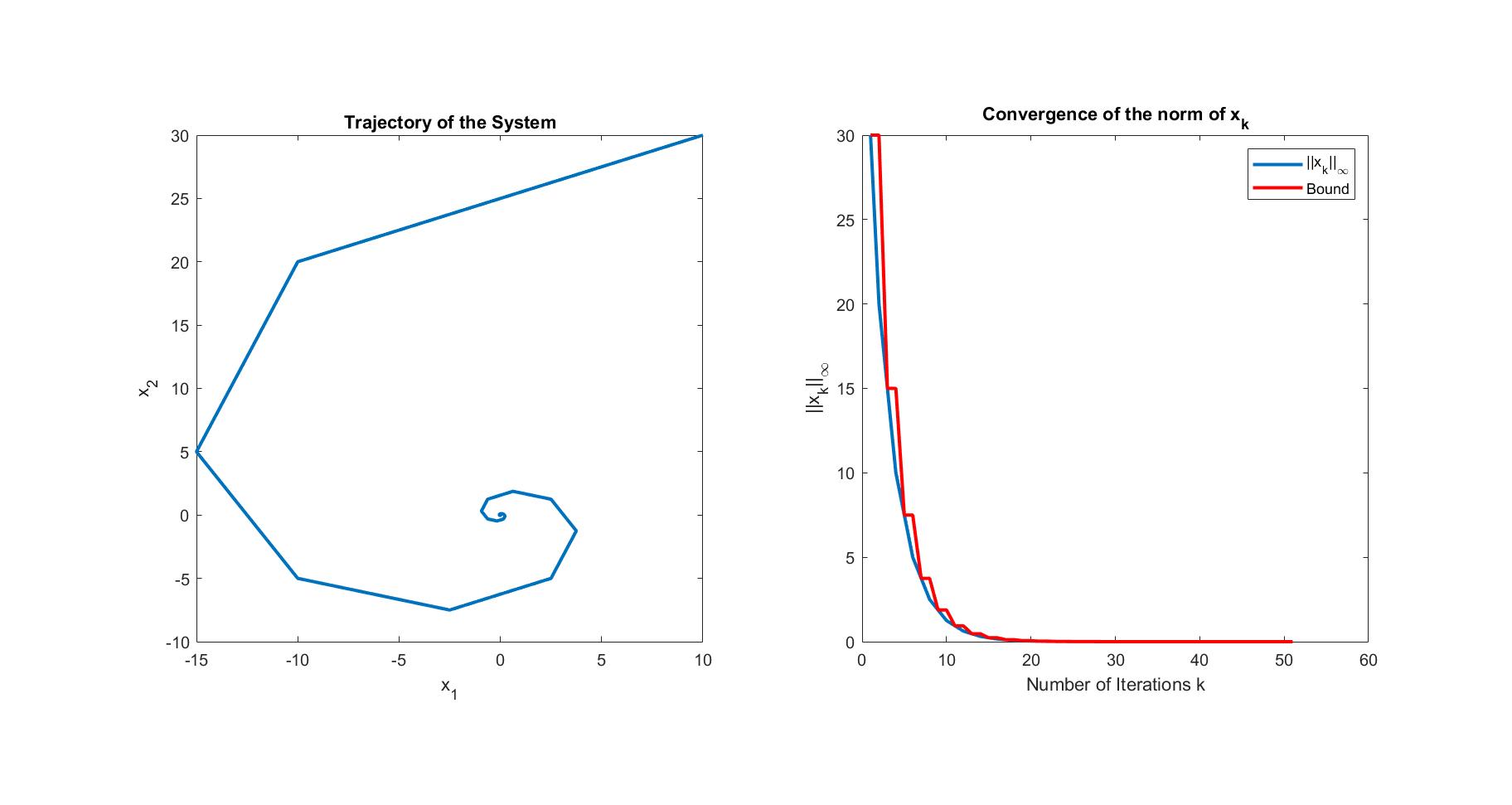}
    \caption{$l_\infty$: Trajectory and convergence results for $\theta=\frac{\pi}{2}$, $\alpha=0.5$.}
    \label{fig:linfinity_theta_90}
\end{figure}

\begin{figure}[!htp]
    \centering
 \includegraphics[width=\linewidth]{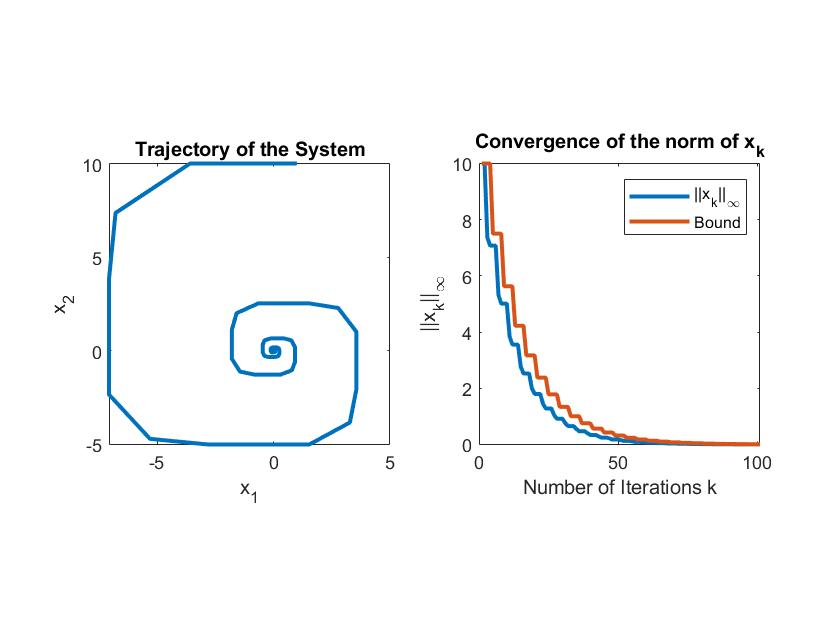}
    \caption{$l_\infty$: Trajectory and convergence results for $\theta=\frac{\pi}{4}$, $\alpha=0.5$.}
    \label{fig:linfinity_theta_45}
\end{figure}

\begin{figure}[!htp]
    \centering
    \includegraphics[width=\linewidth]{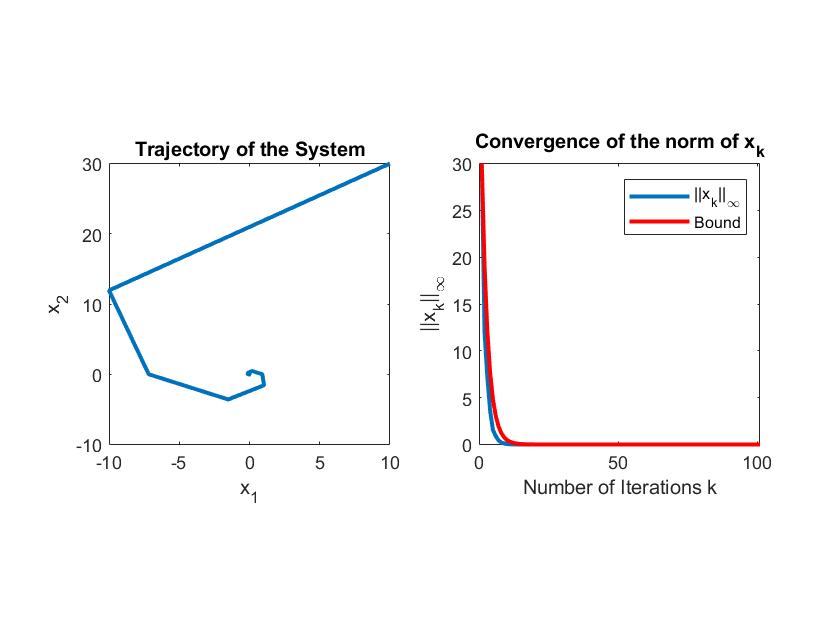}
    \caption{$l_\infty$: Trajectory and convergence results for $\theta=\frac{2\pi}{3}$, $\alpha=0.5$.}
    \label{fig:linfinity_theta_120}
\end{figure}

\subsection{$l_2$ norm with noise}
The initial position is set as $x_1 = [1,3]^\top$.
We used Gaussian noise with zero mean and variance $A= 2$.
The expectation $E[||x_k||^2]$ is calculated based on $10^4$ experiments.
Each experiment has 100 iterations.
Fig \ref{fig:l2_stoch_theta_45} shows the results of the convergence speed and the derived bound.
\begin{figure}[!htp]
    \centering
\includegraphics[width=\linewidth]{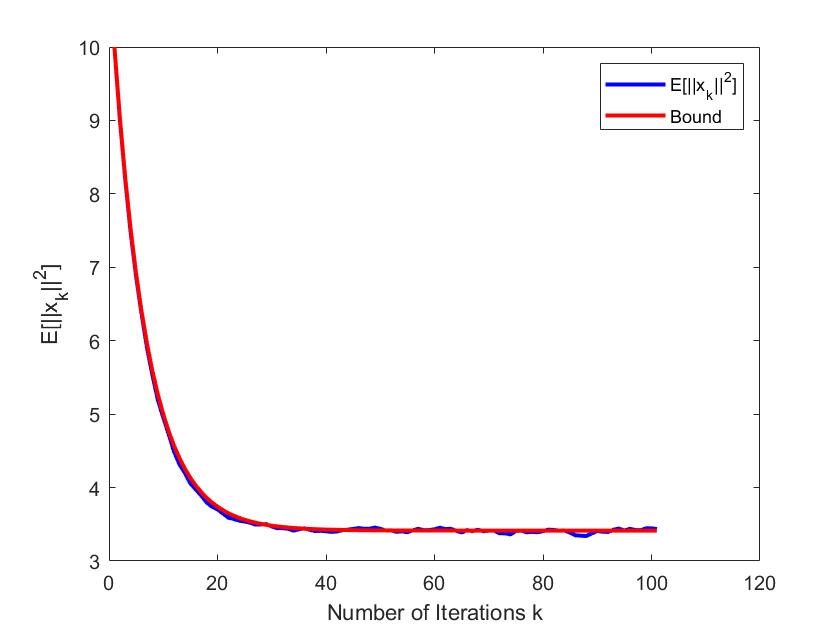}
    \caption{$l_2$ with noise: Trajectory and convergence results for $\theta=\frac{\pi}{4}$, $\alpha=0.5$, $A=2$, $B=0$.}
    \label{fig:l2_stoch_theta_45}
\end{figure}

When $\theta=\frac{\pi}{4}$ and $\alpha=0.5$, to make system stable, $B$ should satisfy $B<(1-\mu)/\alpha^2=0.5858$.
The expectation $E[||x_k||^2]$ is calculated based on $10^5$ experiments.
Each experiment has 1000 iterations.
Figure \ref{fig:l2_stoch_theta_45_B} shows the result of the convergence results when $B=0.5$.
It verifies the correctness of the derived bound \eqref{eqn:2normnoise}.
\begin{figure}[!htp]
    \centering
\includegraphics[width=\linewidth]{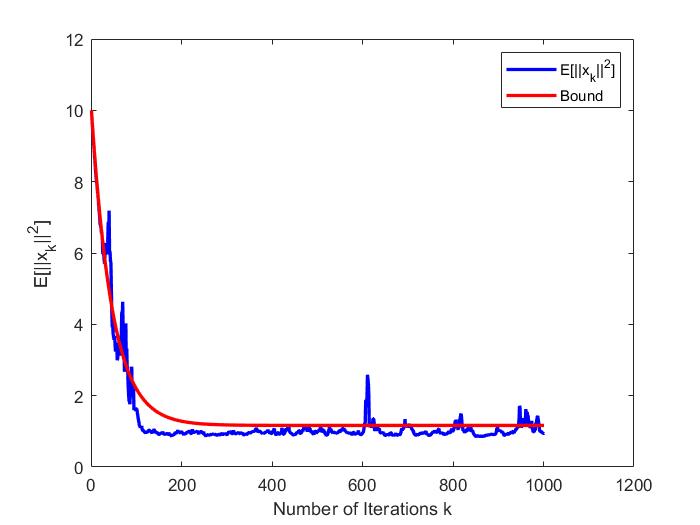}
    \caption{$l_2$ with noise: Trajectory and convergence results for $\theta=\frac{\pi}{4}$, $\alpha=0.5$, $A=0.1$, $B=0.5$.}
   \label{fig:l2_stoch_theta_45_B}
\end{figure}

\section{CONCLUSIONS}\label{sec:conclusion}
In this article, we analyzed the finite-sample bounds of rotation operators which is neither contractive nor non-expansive, under $l_2$ norm and $l_\infty$ norm with and without noise respectively.
Simulation results are provided to illustrate our theoretical results.
Even though we only considered a two-dimensional rotation matrix here, we hope this work can give some insight into the extension to any other similar operators, either linear or nonlinear.
One extension could be to find the theoretical bound for the stochastic $l_\infty$ iteration.
One other possible extension can be the convergence speed when the step size is not constant.

\bibliographystyle{./IEEEtran} 
\bibliography{./IEEEabrv,./IEEEexample}

\end{document}